\documentclass[
reprint,amsfonts, amssymb, amsmath,  showkeys, superscriptaddress,longbibliography,nofootinbib]{revtex4-2}

\usepackage{graphicx}
\usepackage{array}
\usepackage{xcolor}
\usepackage{float}
\usepackage{physics}
\usepackage{enumerate}
\usepackage{enumitem}
\usepackage{lipsum}
\usepackage{amsfonts}
\usepackage{mathtools}
\usepackage[normalem]{ulem} 
\usepackage{dsfont}
\usepackage{bbm}

\usepackage{booktabs}
\usepackage{url}

\usepackage{braket}

\usepackage{amsthm}
\usepackage{bm}% bold math
\usepackage[T1]{fontenc}

\definecolor{C1}{RGB}{52, 89, 149}
\definecolor{C4}{RGB}{202, 21, 81}
\definecolor{C5}{RGB}{180,77,155}
\definecolor{C6}{RGB}{30,125,125}
\usepackage[colorlinks=true,citecolor=C5,linkcolor=C1,urlcolor=C5]{hyperref}

\usepackage{thm-restate}

\newcommand{\img}[2][2.2ex]{%
  \mathrel{\vcenter{\hbox{\includegraphics[height=#1]{#2}}}}%
}

\usepackage{tikz}
\usetikzlibrary{fadings}
\usepackage{mathrsfs}
\usepackage{amsmath}

\newcommand{\sketbra}[2]{|#1 \rangle\! \langle  #2 |}

\usepackage{comment}

\theoremstyle{remark}

\newcommand*{\ot}{\otimes}

\newcommand*{\dg}{\dagger}
\DeclareMathOperator*{\expect}{\mathbb{E}}

\definecolor{dpink}{rgb}{0.85, 0.10, 0.45}

\newcommand{\mcc}{\mathcal{C}}

\newcommand{\mcb}{\mathcal{B}}
\newcommand{\mch}{\mathcal{H}}
\newcommand{\mco}{\mathcal{O}}

\newcommand{\mcp}{\mathcal{P}}
\newcommand{\mcv}{\mathcal{V}}

\newcommand{\mct}{\mathcal{T}}
\newcommand{\mcd}{\mathcal{D}}
\newcommand{\mcw}{\mathcal{W}}

\newcommand{\mbc}{\mathbb{C}}

\newcommand{\mbf}{\mathbb{F}}

\newcommand{\mbs}{\mathbb{S}}

\newcommand{\mbu}{\mathbb{U}}

\newcommand{\mg}{\mathsf{MG}(n)}
\newcommand{\mgb}{\mathsf{MG}(n_B)}

\newcommand{\mbsp}{\mathbb{SP}}

\newcommand{\mbso}{\mathbb{SO}}
\newcommand{\id}{\mathds{1}}

\newcommand{\lm}{\lambda}

\def\tn{^{\otimes n}}

\newcommand{\Lm}{\Lambda }

\def\sg{\sigma}

\def\ad{^{\dagger}}

\begin{document}

\title{Free fermionic black holes as polarised mirrors } 

\author{Maureen Krumt{\"u}nger}
\email{maureen.krumtuenger@student.unimelb.edu.au}
\affiliation{School of Physics, University of Melbourne, Parkville, VIC 3010, Australia}
\author{Maxwell West}
\email{westm@lanl.gov}
\affiliation{Theoretical Division, Los Alamos National Laboratory, Los Alamos, New Mexico 87545, USA}
\affiliation{Quantum Science Center, Oak Ridge, TN 37931, USA}
\affiliation{School of Physics, University of Melbourne, Parkville, VIC 3010, Australia}

\begin{abstract}
Understanding the speed with which various classes of quantum systems can scramble information is a key goal of quantum information theory, with relevance to both fundamental physics and  the study of the efficiency of quantum algorithms. Famously, such understanding enabled  Hayden and Preskill to  show that, under the assumption that black hole dynamics may be modelled by a Haar-random unitary, old black holes act as ``mirrors'',  reflecting infalling quantum information back out again effectively instantaneously. In this work, we investigate which aspects of this mirror behaviour survive beyond the Haar-random setting by considering a random free fermionic model of black hole dynamics. We find that the rate of information release is governed by the underlying representation-theoretic structure; such a black hole acts as a ``polarised mirror'', retaining information encoded in low fermionic-degree irreps while reflecting information encoded in high fermionic-degree irreps much more rapidly. As a result of the retention of low-degree information, we show that faithful recovery of the infalling quantum information is impossible until all but a vanishing fraction of the black hole has evaporated. 
However, augmenting the dynamics with a fixed Clifford unitary can restore their reflective behaviour, even though the ensemble’s frame potential remains unchanged and far above its Haar value. Remarkably, we find that such black holes can outperform their fully Haar-random counterparts: in the high-fidelity limit, recovering a single-qubit message requires asymptotically approximately $37\%$ fewer emitted qubits.
\end{abstract}

\maketitle

\tableofcontents

\section{Introduction}
Black holes are thought to be able to scramble quantum information incredibly quickly~\cite{hayden2007black,sekino2008fast}. 
%This implies that they can scramble  information very quickly, as Nature is very good at scrambling. 
%Indeed, counterintuitively
More broadly, it is known that quantum dynamics can approximately reproduce truly uniformly random behaviour on timescales poly-logarithmic in the system size, even under extremely strong demands of what ``approximation'' should constitute~\cite{schuster2024random,laracuente2024approximate,cui2025unitary,schuster2025strong,parella2026strong}. This is an intrinsically quantum mechanical effect, with such rapid scrambling of information by classical systems straightforwardly disallowed by lightcone arguments~\cite{schuster2024random}.

Within the context of black hole physics, Hayden and Preskill discovered in their seminal work~\cite{hayden2007black}  a particularly remarkable consequence of the scrambling capacities of random quantum systems: quantum information which falls into an old black hole is in a certain sense reflected back out again almost instantaneously. More concretely, the statement goes as follows. We imagine our old black hole to have already radiated away at least half of its initial mass\footnote{We will take ``old'' to be synonymous with ``more than half evaporated''}, with this radiation taken to be maximally entangled with the remaining black hole~\cite{page1993average}. An observer Bob has intercepted this radiation and stored it coherently. A second observer, Alice, then throws an $m$-qubit state (a ``message''), initially maximally entangled with a reference system $A$, into the black hole. The surprising result is that Bob (in principle) can recover Alice's message with high fidelity after collecting only a few more than $m$ additional Hawking radiation qubits. The intuition is that the strongly scrambling nature of the black hole dynamics quickly decouples the reference system $A$ from the degrees of freedom of the remaining black hole, leaving Alice's message recoverable from the combined system of the previously and newly emitted Hawking radiation (with all of this to be made precise in Section~\ref{sec:prelim}). 

The Hayden-Preskill  thought experiment invites various natural follow-up questions. For example, the guarantee of Bob's ability to retrieve Alice's message is information theoretic, with the actual computation seemingly generically intractable, though not entirely devoid of non-trivial structure~\cite{yoshida2017efficient,leone2022retrieving}. It is  also interesting to investigate if  the conclusions of the thought experiment require the interior of the black hole to be behaving genuinely Haar randomly\footnote{Up to the second moment anyway, as it is immediate from   Ref.~\cite{hayden2007black} that being a unitary 2-design is already sufficient}, or whether similar effects can be observed under more modest assumptions~\cite{nakata2024hayden,rampp2024hayden,nakata2023black}. 

In this work we consider the Hayden-Preskill  thought experiment under the assumption that the interior dynamics of the black hole can be modelled as a Haar-random \textit{matchgate circuit}; that is, random free fermionic evolution~\cite{jozsa2008matchgates}. This may in  some sense seem rather counter to the spirit of the experiment, as it is known that matchgate circuits are not good scramblers of quantum information. Indeed, for example, they generate only diffusive operator   spreading~\cite{dias2021diffusive}, as opposed to the ballistic spreading of the full unitary case~\cite{nahum2018operator}, and lead to slower (if eventually  volume-law) entanglement growth~\cite{bianchi2021page,paviglianiti2026emergence}; moreover, they support design formation only in linear depth~\cite{west2025no,grevink2025will,west2026ambient}, exponentially slower than the aforementioned logarithmic scaling of the unitary case. Nonetheless, we discover that their analysis reveals some interesting  details that are perhaps not entirely transparent in the original unitary setup.

Our first finding is that such free fermionic black holes do not enable the fast recovery of infalling quantum information. This is unsurprising; indeed, as discussed, matchgate circuits are known to fail to scramble efficiently, thereby lacking the very capacity putatively responsible for the fast recovery of information in the unitary case. Nevertheless, the precise (representation theoretic) mechanism by which such black holes fail to quickly release information is interesting: they act as \textit{``polarised mirrors''}, suppressing the release of information stored in each matchgate irrep according to its fermionic-degree, with low-degree sectors being released significantly more slowly than high-degree sectors. Consequently, we prove that high-fidelity information recovery is impossible by collecting only a few additional radiation qubits. 

Our second (and much more surprising) finding is that, by augmenting the dynamics with a fixed Clifford unitary ``at the horizon'' of the black hole, we can not only restore the fast release of information, but also show that such augmented black holes can outperform their Haar-random counterparts, in the sense of requiring asymptotically fewer radiation qubits to achieve the same target average recovery fidelity. This is despite the fact that prepending a fixed unitary to each element of an ensemble has no effect on several common diagnostics of scrambling; for example, it does not change the so-called frame potential~\cite{mele2023introduction}. As we shall see, therefore, our (fixed) Clifford + (random) matchgate ensemble is very far from being a unitary 2-design, and nevertheless reproduces (and in fact in a sense betters) the results of the Hayden-Preskill thought experiment. 

The remainder of the paper is structured as follows. We begin in Section~\ref{subsec:hp_protocol} by reviewing the Hayden--Preskill protocol and establishing our notation. In Section~\ref{subsec:matchgates}, we introduce the relevant properties of matchgate dynamics, including the explicit structure of the two-copy commutant underlying our main results. In Section~\ref{sec:ffbh}, we derive a degree-resolved decoupling formula and show how low-degree message correlations obstruct high-fidelity recovery in the early-radiation regime. In Section~\ref{sec:nongaussianhorizon}, we show that a suitably chosen fixed Clifford unitary restores Haar-like recovery and provide an explicit construction that asymptotically improves on the Haar-random benchmark. We conclude with a discussion of our results in Section~\ref{sec:discussion}. Proofs of all theorems and lemmas presented in this work are collected in Appendix~\ref{sec:proofs}.

\section{Preliminaries}\label{sec:prelim}
In this section, we establish some notation and review the basic details of both the Hayden-Preskill thought experiment, and some relevant details of the theory of matchgate circuits.

\subsection{The Hayden-Preskill protocol}\label{subsec:hp_protocol}

We begin by introducing the  setup of the Hayden-Preskill protocol~\cite{hayden2007black} (illustrated in Figure~\ref{fig:setup}) and recalling the main conclusions. At the beginning of the thought experiment, our total system is partitioned into four subsystems. We let $B_i$ denote the initial ($n_{B_i}$-qubit) black hole, which we take to be maximally entangled with previously emitted (``early radiation'') $E$~\cite{page1993average}. This system $E$ is imagined to be under the control of an observer Bob, who can perform arbitrary quantum computations on it. A second observer, Alice, possesses a maximally entangled state $\Phi^{MA}$ across two further systems, the $m$-qubit ``message system'' $M$, and a reference system $A$, so that the initial state of the total system is\footnote{Up to a possible unimportant unitary acting on $E$, which we could anyway imagine Bob undoing} $\rho^{EB_iMA}=\Phi^{EB_i}\ot\Phi^{MA}$.  
We assume throughout that the message is small compared with the initial black hole, i.e. $m\ll n_{B_i}$, and use $d_S$ to denote the dimension of a subsystem $S$. For example, $d_{B_i}=2^{n_{B_i}}$ and $d_{A}=d_{M}=2^{m}$.

\begin{figure}
    \centering
    \includegraphics[width=0.6\linewidth]{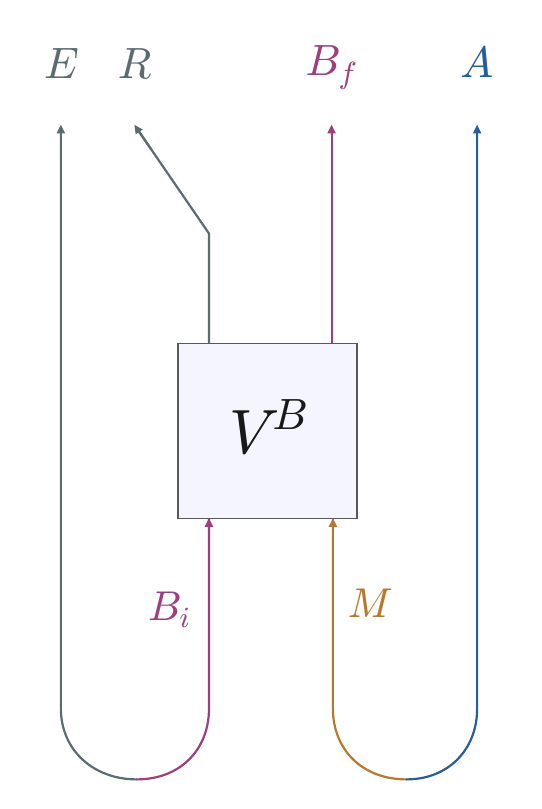}
    \caption{A graphical depiction of the setup of the Hayden--Preskill protocol~\cite{hayden2007black}. The old black hole $B_i$ is assumed to be past the Page time, so that it is modelled as maximally entangled with previously emitted radiation $E$~\cite{page1993average}. Alice possesses a Bell state across the system $MA$, and throws the ``message'' half of it into the black hole. The internal scrambling dynamics of the black hole are modelled by a unitary $V^B$ acting on the combined system $B = MB_i$. Finally, further modes of radiation constituting a system $R$ are emitted, leaving the final black hole $B_f$.  }
    \label{fig:setup}
\end{figure}

Now, we imagine Alice throwing the qubits of $M$ into  the  initial black hole, thereby forming the enlarged black hole $B:= MB_i$, with $n_B=m+n_{B_i}$ and $d_B=d_Md_{B_i}$.  
This black hole then evolves under  dynamics    modelled by a unitary $V^B$ drawn Haar randomly from some compact group $G$  (which in the original Hayden-Preskill protocol was simply taken to be the unitary group $\mbu(d_B)$).  Finally, we imagine the black hole radiating a system $R$ of $r$ further qubits (which are collected by Bob and coherently stored alongside the  previously captured radiation $E$), leaving behind a remaining black hole $B_f$, with $n_B=n_{B_f}+r$ and $d_B=d_{B_f}d_R$. For a given $V^B\in G$, the final state of the total system is therefore
\begin{equation}\label{eq:finalpure}
    \ket{\Psi^{ERB_fA}_{V}}\! =\! (\id_E\ot V^B\ot \id_A)(\ket{\Phi^{EB_i}}\ot \ket{\Phi^{MA}})
\end{equation}
(where it is convenient to identify the unitary $V^B$ as a map  $B\to B_fR$). The corresponding reduced state on $AB_f$ is
\begin{align}
\sigma_V^{AB_f}&:=\Tr_{ER}\!\left[\Psi_V^{ERB_fA}\right] \notag\\
&=\Tr_R\!\left[(\id_A\otimes V^B)\,\omega_{\mathrm{HP}}^{AB}\,(\id_A\otimes(V^B)^\dagger)\right],\label{eq:sigma_abf}
\end{align}
where $\Psi_V^{ERB_fA}:=\ket{\Psi_V^{ERB_fA}}\!\bra{\Psi_V^{ERB_fA}}$, and $\qquad\omega_{\mathrm{HP}}^{AB}:=\Phi^{MA}\otimes\tau^{B_i}$.
We note that $\ket{\Psi^{ERB_fA}_{V}}$ is maximally entangled across the bipartition $ERB_f:A$, so that Alice's final state is the maximally mixed state $\tau^A:=\id^A/d_A$. This is an important observation because, by Uhlmann's theorem, if the reduced final state $\sigma_{V}^{AB_f}$ is (approximately) decoupled across the bipartition $A:B_f$, there exists a decoding isometry acting on $ER$ whose output contains a register
$\widehat M\cong M$ that (after discarding an auxiliary register) is approximately maximally entangled with $A$. 

A key conceptual point (which in some sense could be considered the motivation for taking Alice's message to be maximally entangled with a reference system in the first place) is that perfect recovery in the Bell state case implies that Bob's decoder will automatically recover \textit{any} message state. To see  this, let $\mct_{V,\mathcal D}:M\to\widehat{M}$ denote the entire effective channel (i.e. consisting of the evolution, decoder $\mathcal D$, and partial trace over $JB_f$), so that $(\mct_{V,\mathcal D}\otimes\id_A)(\Phi^{MA})=\Phi^{\widehat{M} A}$. Expanding the Bell-state density operator in the computational basis then gives
\begin{equation*}
    \frac{1}{d_M}\sum_{i,j}\mct_{V,\mathcal D}\big(\ketbra{i}{j}^M\big) \otimes \ketbra{i}{j}^A=\frac{1}{d_M}\sum_{i,j}
\sketbra{ i}{ j}^{\widehat M}
\otimes \ketbra{i}{j}^A\,.
\end{equation*}
But, since the matrix units on $A$ are linearly independent, this implies that $\mct_{V,\mathcal D}(\ketbra{i}{j}^M)=\sketbra{i}{ j}^{\widehat M}$ for every $i,j$.  In particular, if $\ket\psi=\sum_i\alpha_i\ket{i} $,  then 
\begin{equation*}
\mct_{V,\mathcal D}(\ketbra{\psi}{\psi}^M)=\sum_{i,j}\alpha_i\alpha_j^*\sketbra{ i}{ j}=\sketbra{\psi}{\psi}^{\widehat M}.
\end{equation*}

\begin{figure*}
    \centering
    \includegraphics[width=0.9\linewidth]{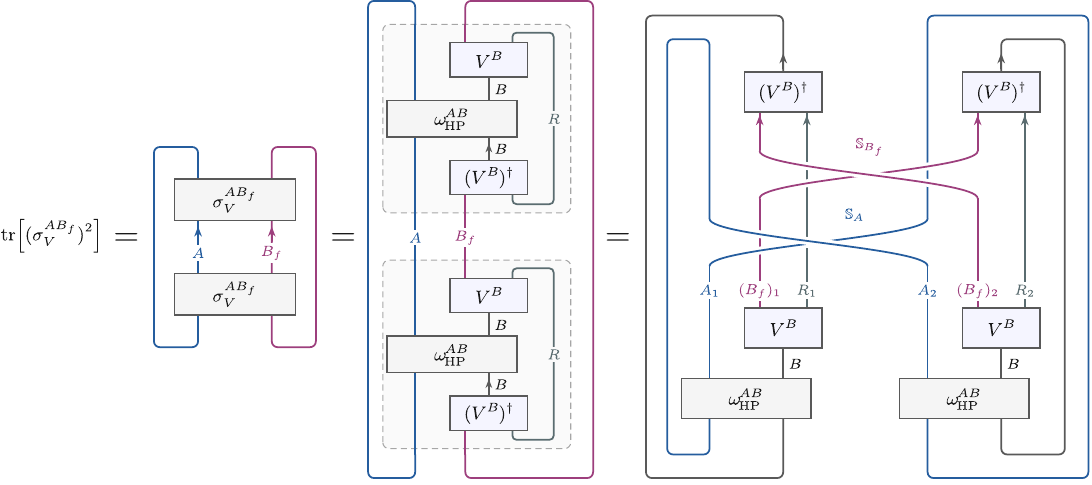}
    \caption{A tensor network diagram depicting the reduction, via the swap trick~\cite{mele2023introduction}, of the purity $\tr\left[(\sigma_{V}^{AB_f})^2\right]$ to a form amenable to a second-order integral over $G$. Here, the reduced state is defined in Eq.~\eqref{eq:sigma_abf}, and the initial state on $MA$ is the Bell state. Reading the final diagram from bottom to top yields Eq.~\eqref{eq:pure_swap}. }
    \label{fig:purity}
\end{figure*}

Thus, in order to quantify Bob's ability to recover Alice's quantum message, we study his ability to restore the maximally entangled state $\Phi^{\widehat MA}$. To that end, let us define the optimal \textit{recovery fidelity}  
 
\begin{align*}
F_{\mathrm{rec}}(V^B)&:=\max_{\mct_{V,\mathcal D}}F\!\left((\mct_{V,\mathcal D}\otimes\id_A)(\Phi^{MA}),\Phi^{\widehat M A}\right)\,,
\end{align*}
where $F(\rho,\sg)=\|\sqrt{\rho}\sqrt{\sg}\|_1^2$ denotes the usual fidelity of quantum states and we optimise over all possible CPTP maps implementable by Bob's actions on $ER$. Equivalently, we can define the corresponding recovery infidelity by $\varepsilon:=1-F_{\mathrm{rec}}(V^B)$.

Our   intuition connecting   the decoupling of $\sigma_{V}^{AB_f}$ over the bipartition $A:B_f$ with Bob's capability to reconstruct Alice's message may  then be formalised in several useful ways. First, we show in Appendix~\ref{app:recovery_general} that we have
\begin{restatable}{lem}{lemsep}\label{lem:sep}
For every $V^B$, the  optimal recovery fidelity may be expressed as 
\begin{equation}\label{eq:fid_opt}
 F_{\mathrm{rec}}(V^B) =\max_{\zeta^{B_f}}
 F\big(\sigma_{V}^{AB_f},\tau^A\otimes\zeta^{B_f}\big)\,,
\end{equation}
where we maximise over all states $\zeta^{B_f}$ on $B_f$. 
\end{restatable}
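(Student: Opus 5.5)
The plan is to prove the two inequalities separately, using Uhlmann's theorem on the global pure state $\ket{\Psi_V^{ERB_fA}}$ of Eq.~\eqref{eq:finalpure}, which purifies $\sigma_V^{AB_f}$ with purifying system $ER$ held by Bob.

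\textbf{Achievability ($\ge$).} Fix any state $\zeta^{B_f}$ and purify it as $\ket{\zeta}^{B_fJ}$ with an auxiliary register $J$. Then $\ket{\Phi^{\widehat M A}}\ot\ket{\zeta}^{B_fJ}$ purifies $\tau^A\ot\zeta^{B_f}$, with purifying system $\widehat M J$. Uhlmann's theorem says that the maximum, over isometries $W:ER\to\widehat M J$ (adjoining ancillas to $ER$ if dimensions require it), of $|\bra{\Phi^{\widehat MA}}\bra{\zeta^{B_fJ}}(W\ot\id_{AB_f})\ket{\Psi_V}|^2$ equals $F(\sigma_V^{AB_f},\tau^A\ot\zeta^{B_f})$. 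Let $\mathcal D$ be this optimal isometry followed by $\Tr_J$. The resulting output state on $\widehat M A B_f J$ has fidelity at least $F(\sigma_V^{AB_f},\tau^A\ot\zeta^{B_f})$ with $\Phi^{\widehat MA}\ot\zeta^{B_fJ}$. Tracing out $B_fJ$ cannot decrease fidelity (monotonicity), so $F\big((\mct_{V,\mathcal D}\ot\id_A)(\Phi^{MA}),\Phi^{\widehat MA}\big)\ge F(\sigma_V^{AB_f},\tau^A\ot\zeta^{B_f})$. Maximising over $\zeta$ gives this direction.

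\textbf{Converse ($\le$).} Take any decoder, and Stinespring-dilate it to an isometry $W:ER\to\widehat M J$. This produces a pure state $\ket{\Omega}^{\widehat M J B_f A}$ whose marginal on $\widehat MA$ is $\rho^{\widehat MA}=(\mct_{V,\mathcal D}\ot\id_A)(\Phi^{MA})$. Since $\Phi^{\widehat MA}$ is pure, Uhlmann's theorem gives a state $\ket{\phi}^{JB_f}$ such that
\[
F(\rho^{\widehat MA},\Phi^{\widehat MA})=|\bra{\Phi^{\widehat MA}}\bra{\phi^{JB_f}}\Omega\rangle|^2 .
\]
Setting $\zeta^{B_f}:=\Tr_J\phi$ and tracing out $\widehat M J$, monotonicity of fidelity bounds this by $F(\Omega^{AB_f},\tau^A\ot\zeta^{B_f})$. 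The isometry $W$ acts only on $ER$, so $\Omega^{AB_f}=\sigma_V^{AB_f}$. Hence the recovery fidelity of any decoder is at most $\max_\zeta F(\sigma_V^{AB_f},\tau^A\ot\zeta^{B_f})$.

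The main subtlety is bookkeeping rather than any hard estimate. The achievability step needs $\widehat MJ$ to be large enough to contain $ER$, which is handled by adding ancillas. The converse needs the joint pure state to carry the product structure across $\widehat MA:JB_f$ before tracing out. Monotonicity of fidelity under partial trace, together with the fact that the maximum is attained (the set of states $\zeta$ is compact), then closes both directions.
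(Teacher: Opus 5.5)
Your proof is correct and is essentially the paper's argument. Both rest on Stinespring-dilating the decoder, writing the overlap with the pure target as $\max_\eta|(\bra{\Phi^{\widehat MA}}\otimes\bra{\eta})\ket{\Omega}|^2$, and applying Uhlmann's theorem to the purifications $\ket{\Psi_V}$ and $\ket{\Phi^{\widehat MA}}\otimes\ket{\eta}$; the paper just fuses your two inequalities into a single chain of equalities by exchanging the maxima over the isometry and over $\eta$, rather than invoking monotonicity of the fidelity in each direction.

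One small slip: to guarantee that the Uhlmann isometry $ER\to\widehat MJ$ exists, you should enlarge the purifying register $J$ of $\zeta^{B_f}$, not adjoin ancillas to $ER$.
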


We also find it useful to introduce an operator $X_{V}^{AB_f}$ capturing the failure of the final state on $AB_f$ to decouple from its marginals, namely
\begin{align}
    X_{V}^{AB_f}&=\sigma_{V}^{AB_f}-\sigma^A\otimes\sigma_{V}^{B_f}\nonumber\\
    &=\sigma_{V}^{AB_f}-\tau^A\otimes\tau^{B_f},\label{eq:xab} 
\end{align}
where the fact that  $\sigma^{B_f}_{V}=\tr_A[\sigma_{V}^{AB_f}]$ is the maximally mixed state $\tau^{B_f}$ on the remaining black hole follows from a  direct calculation. We are therefore led to consider what one might call the 
\textit{average decoupling error} over $V^B\sim\mu_G $, i.e. 
\begin{equation}\label{eq:dg}
    \Delta^{G}_p:=\expect_{V^B\sim \mu_G}\big\|X_{V}^{AB_f}\big\|_p^2\, ,
\end{equation}
where $\|(\cdot)\|_p$ denotes the Schatten $p$-norm. It will sometimes be helpful to use that, by the Fuchs–van de Graaf inequality~\cite{fuchs1999cryptographic}, we have
\begin{equation}\label{eq:fdg}
    \|X_{V}^{AB_f}\|_1 \leq 4\sqrt{1-F_{\rm rec}(V^B)}.
\end{equation}

Now, a second formalisation of the principle linking the decoupling of the state over $AB_f$ with Bob's capability to reconstruct Alice's message is then furnished by Lemma 1.1 of Ref.~\cite{hayden2008decoupling}, which implies that, for every $V^B$, we have 
\begin{align*}
F_{\mathrm{rec}}(V^B)&\geq1-\big\|X_{V}^{AB_f}\big\|_1 \,.
\end{align*}
Consequently, we have by Jensen's inequality,   the concavity of the square root  function, and Eq.~\eqref{eq:dg}, that  
\begin{align}
\expect_{V^B\sim \mu_G} F_{\mathrm{rec}}(V^B)&\geq 1-\expect_{V^B\sim \mu_G} \big\|X_{V}^{AB_f}\big\|_1\nonumber\\
&= 1-\expect_{V^B\sim \mu_G} \sqrt{\big\|X_{V}^{AB_f}\big\|_1^2}\nonumber\\
&\geq 1-\sqrt{\Delta^{G}_1} \label{eq:fidelity_bound}
\end{align}
Thus, $\Delta^{G}_1\leq\varepsilon^2$ guarantees an average recovery fidelity of at least $1-\varepsilon$. For the purposes of upper bounding $\Delta^{G}_1$ (so as to obtain guarantees on the average recovery fidelity), it can be sufficient to employ the standard estimate  $\|X\|_1^2\leq d\|X\|_2^2$, which holds for any operator   $X:\mbc^d\to \mbc^d$. In our present circumstances, this leads to
\begin{align}\label{eq:general_upper_bound}
    \Delta^{G}_1&\leq d_Ad_{B_f}\Delta^{G}_2\\
    &= d_Ad_{B_f}\expect_{V^B\sim \mu_G} \tr[\big(\sigma_{V}^{AB_f}-\tau^A\otimes\tau^{B_f}\big)^2]\nonumber\\
    &= d_Ad_{B_f}\expect_{V^B\sim \mu_G} \tr[\big(\sigma_{V}^{AB_f}\big)^2]-1\,.\nonumber
\end{align}
The utility of this reformulation is that $\expect_{V^B\sim \mu_G} {\rm tr}[(\sigma_{V}^{AB_f})^2]$, describing the average purity of  $\sigma_{V}^{AB_f}$, may be evaluated via \textit{Weingarten calculus}~\cite{mele2023introduction,weingarten1978asymptotic,collins2006integration} over $G$. 
Indeed, using Figure~\ref{fig:purity}, one has

\begin{equation}\label{eq:pure_swap}
\begin{aligned}[b]
&{\rm tr} [(\sigma_V^{AB_f})^2 ]\! =\!{\rm tr}\Bigl[(\omega_{\mathrm{HP}}^{AB})^{\otimes 2}\bigl(\mbs_A\!\otimes\!V^{B \dagger\otimes 2}(\mbs_{B_f}\!\otimes\!\id_R^{\otimes 2})V^{B \otimes 2}\bigr)\Bigr],
\end{aligned}
\end{equation}
where $\mbs_X$ denotes the swap operation between two isomorphic copies of $X$. For example, when  $G=\mbu(d_B)$ (i.e. the setting of the original Hayden-Preskill protocol) one   quickly finds  for fixed $m,r$ and $n_B\to\infty$
\begin{equation}\label{eq:d2u}
 d_Ad_{B_f} \Delta_2^{\mbu(d_B)} = \frac{(d_A^2-1)(d_{B_f}^2-1)}{d_B^2-1}\sim (4^m-1)4^{-r}\,,
\end{equation}
whence Eqs.~\eqref{eq:fidelity_bound} and~\eqref{eq:general_upper_bound} combine to yield $\expect_{V^B\sim \mu_{\mbu(d_B)}} F_{\mathrm{rec}}(V^B)\gtrsim 1-2^{m-r}$, which is the bound of Ref.~\cite{hayden2007black}. Thus, for a fully unitary Haar-random black hole, Bob needs to collect only a few more qubits than the size of the message in order to recover the message with a fidelity exponentially close to one.

Thus, analysis at the level of using the estimate $\Delta_1^{G}\leq d_Ad_{B_f} \Delta_2^{G}$ already suffices to yield the core result in the case of a unitary Haar-random black hole. More generally, however, we would like both to tighten the bound of Eq.~\eqref{eq:fidelity_bound} as much as possible, and prove an \textit{upper} bound on the fidelity, so as to certify the failure of certain black holes to reflect information. To that end, our first main result, proved in Appendix~\ref{app:recovery_general}, is the following.
\begin{restatable}{thm}{thmnew}\label{thm:new}
%For all compact groups $G$, 
Let $\mu$ be any probability measure on the unitaries acting on $B$. The average optimal recovery fidelity satisfies
\begin{equation}
 \frac{1}{1+d_Ad_{B_f} \Delta_2^{\mu}}\leq \expect_{V^B\sim\mu}F_{\mathrm{rec}}(V^B)\leq1-\frac{d_Ad_{B_f} \Delta_2^{\mu}}{4(d_A+1)^2},
 \label{eq:newbound}
\end{equation}
where $\Delta_2^\mu:=\expect_{V^B\sim\mu}\bigl\|X_V^{AB_f}\|_2^2$.
\end{restatable}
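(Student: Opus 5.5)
The plan is to prove both inequalities pointwise in $V^B$, in terms of the single quantity $D\|X_V^{AB_f}\|_2^2$ with $D:=d_Ad_{B_f}$, and then average over $\mu$. Two facts will be used throughout. First, $D\,{\rm tr}[(\sigma_V^{AB_f})^2]=1+D\|X_V^{AB_f}\|_2^2$, exactly as in Eq.~\eqref{eq:general_upper_bound}. Second, $\sigma^{B_f}_V=\tau^{B_f}$, so the standard operator inequality $\rho^{AB}\le d_A\,\id_A\otimes\rho^B$ gives $\|\sigma_V^{AB_f}\|_\infty\le d_A/d_{B_f}$.

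\emph{Lower bound.} By Lemma~\ref{lem:sep} we may choose $\zeta^{B_f}=\tau^{B_f}$, which gives $F_{\rm rec}(V^B)\ge F(\sigma_V,\tau^{AB_f})=({\rm tr}\sqrt{\sigma_V})^2/D$. The log-convexity of Schatten norms (Hölder's inequality) gives $({\rm tr}\,\rho^{1/2})^2\,{\rm tr}\,\rho^2\ge({\rm tr}\,\rho)^3=1$ for any state $\rho$. Hence
\[F_{\rm rec}(V^B)\ge \frac{1}{D\,{\rm tr}\,\sigma_V^2}=\frac{1}{1+D\|X_V\|_2^2}.\]
Since $x\mapsto 1/(1+x)$ is convex, Jensen's inequality gives $\expect F_{\rm rec}\ge 1/(1+D\Delta_2^\mu)$.

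\emph{Upper bound.} Fix $V$ and let $\zeta$ attain the maximum in Lemma~\ref{lem:sep}; set $Y:=\sigma_V-\tau^A\otimes\zeta$.

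The first step is to compare $X_V$ with $Y$. The map $\mathcal{P}(Z):=\tau^A\otimes{\rm tr}_A Z$ is the Hilbert--Schmidt orthogonal projection onto operators of the form $\id_A\otimes(\cdot)$. Using ${\rm tr}_A\sigma_V=\tau^{B_f}$, one checks that $X_V=(\mathrm{id}-\mathcal{P})(Y)$. Consequently $\|X_V\|_2\le\|Y\|_2$, and more usefully $X_V=(\mathrm{id}-\mathcal{P})(\sigma_V-WW^\dagger)$ for \emph{any} factorisation $WW^\dagger=\tau^A\otimes\zeta$, because $\mathcal{P}$ fixes $\tau^A\otimes\zeta$.

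The second step is a Bures-type bound. Choose $W=\sqrt{\tau^A\otimes\zeta}\,U$ with $U$ the polar unitary, so that $\|\sqrt{\sigma_V}-W\|_2^2=2-2\sqrt{F}$. Then use the identity
\[\sigma_V-WW^\dagger=(\sqrt{\sigma_V}-W)\sqrt{\sigma_V}+W(\sqrt{\sigma_V}-W)^\dagger .\]
This yields $\|X_V\|_2\le\bigl(\|\sqrt{\sigma_V}\|_\infty+\|W\|_\infty\bigr)\sqrt{2(1-\sqrt F)}$, and $1-\sqrt F\le 1-F$.

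If $\|W\|_\infty^2\le 1/D$ (for instance if $\zeta\le\tau^{B_f}$), then $\|\sqrt{\sigma_V}\|_\infty+\|W\|_\infty\le(d_A+1)/\sqrt{D}$, using $\sqrt{d_A/d_{B_f}}=d_A/\sqrt{D}$. This gives
\[D\|X_V\|_2^2\le 2(d_A+1)^2\,(1-F_{\rm rec}(V^B)),\]
which has the claimed form and even a better constant than $4$. Averaging over $\mu$ then finishes, since the bound is linear in $\|X_V\|_2^2$.

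\emph{Main obstacle.} The hard step is controlling $\|W\|_\infty$, i.e.\ the largest eigenvalue of the optimal $\zeta$, which a priori can be as large as $1/d_A$. I would try two routes.
\begin{itemize}
\item Show that the optimal $\zeta$ can be taken with $\|\zeta\|_\infty\lesssim 1/d_{B_f}$. Intuitively, $\sigma_V$ has flat $B_f$-marginal, and an optimiser should inherit this.
\item Alternatively, avoid $\zeta$ entirely: take $\zeta=\tau^{B_f}$ and instead upper bound $F_{\rm rec}$ through a cruder chain. Namely, use $1-F\ge\tfrac14\|Y\|_1^2$ (Fuchs--van de Graaf), then $\|X_V\|_1\le 2\|Y\|_1$, then $\|X_V\|_2^2\le\|X_V\|_\infty\|X_V\|_1$ with $\|X_V\|_\infty\le (d_A^2-1)/D$. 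The slack factor $4$ in the target constant, versus the $2$ obtained above, suggests the authors have room for such a loss.
\end{itemize}
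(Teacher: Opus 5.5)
Your lower bound is correct and matches the paper's argument: take $\zeta^{B_f}=\tau^{B_f}$ in Lemma~\ref{lem:sep}, apply H\"older, then Jensen. The upper bound has a genuine gap at the step you flag, and neither fallback closes it.

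First, $\|W\|_\infty^2=\|\zeta\|_\infty/d_A\le 1/D$ is equivalent to $\|\zeta\|_\infty\le 1/d_{B_f}$, which for a normalised state forces $\zeta=\tau^{B_f}$. So your Bures estimate only controls $F_0:=F(\sigma_V^{AB_f},\tau^A\otimes\tau^{B_f})$, not $F_{\rm rec}$.

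Your first route is false as stated. Take $B_f=B_1B_2$ with $B_1\cong A$. A controlled swap realises $\sigma_V=d_{B_2}^{-1}\bigl[\tau^{AB_1}\otimes|0\rangle\langle 0|+\sum_{b\neq 0}\Phi^{AB_1}\otimes|b\rangle\langle b|\bigr]$. Block-diagonality plus Cauchy--Schwarz force every optimiser to satisfy $\|\zeta\|_\infty\ge d_A/(d_A^2+d_{B_2}-1)\approx d_A^2/d_{B_f}$. Then $\|W\|_\infty\approx d_A/\sqrt{D}$, and your chain only yields the constant $\approx 8d_A^2$, which exceeds $4(d_A+1)^2$ once $d_A\ge 4$.

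Your second route loses a square root, not a constant factor. It gives $D\|X_V\|_2^2\le 4(d_A^2-1)\sqrt{1-F_{\rm rec}}$, so the resulting lower bound on $1-F_{\rm rec}$ is quadratic in $D\|X_V\|_2^2$. Since $D\|X_V\|_2^2\le d_A^2-1$, this is strictly weaker than the claimed linear bound, and after averaging it is quadratic in $\Delta_2^\mu$.

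The missing idea is to reduce to $\zeta=\tau^{B_f}$ rather than control the optimiser.
\begin{itemize}
\item Because $\mathrm{tr}_A\sigma_V^{AB_f}=\tau^{B_f}$, monotonicity of the fidelity under $\mathrm{tr}_A$ gives $F(\tau^A\otimes\zeta,\tau^A\otimes\tau^{B_f})=F(\zeta,\tau^{B_f})\ge F(\sigma_V^{AB_f},\tau^A\otimes\zeta)$.
\item The triangle inequality for the angle $\arccos\sqrt{F}$ then yields $\arccos\sqrt{F_0}\le 2\arccos\sqrt{F(\sigma_V^{AB_f},\tau^A\otimes\zeta)}$ for every $\zeta$. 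Hence $F_{\rm rec}(V^B)\le(1+\sqrt{F_0})/2$.
\item Apply your own Bures estimate at $\zeta=\tau^{B_f}$, where $\|W\|_\infty=1/\sqrt{D}$. Keep $1-\sqrt{F_0}$ rather than weakening it to $1-F_0$, which would cost another factor of $2$.
\item This gives $D\|X_V\|_2^2\le 2(d_A+1)^2(1-\sqrt{F_0})\le 4(d_A+1)^2\bigl(1-F_{\rm rec}(V^B)\bigr)$, which averages to the theorem.
\end{itemize}
This is the paper's proof. There the $\zeta=\tau^{B_f}$ step is done on the eigenvalues $x_j=D\lambda_j\in[0,d_A^2]$, which is the same computation because $\sigma_V$ commutes with $\id/D$. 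The factor $4$ versus your $2$ is the price of the angle triangle inequality, not slack you can spend elsewhere.
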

The theorem applies in particular when $\mu=\mu_G$ is the Haar measure on a compact group $G$, in which case we write $\Delta_p^G:=\Delta_p^{\mu_G}$. We state the result in this slightly more general form because we will later apply it to our Clifford-augmented ensemble. 

Theorem~\ref{thm:new} allows us to improve the bound on the message recovery fidelity in the  $\mbu(d_B)$ case. From Eq.~\eqref{eq:d2u} and the identity $d_Ad_{B_i}=d_Rd_{B_f}$ we see immediately that
\begin{align*}
    \expect_{V^B\sim\mu_{\mbu(d_B)}}F_{\mathrm{rec}}(V^B) \gtrsim \frac{1}{1+d_A^2d_{B_f}^2/d_B^2}\gtrsim 1 - 4^{m-r}\,,
\end{align*}
representing a quadratic improvement on the infidelity bound of Ref.~\cite{hayden2007black}. 

Finally, in parts of our analysis it will be useful to allow the
initial state on $AB$ to be some arbitrary state  $\omega^{AB}$; indeed, this will reveal a more structural perspective on the black-hole dynamics in our later analysis. In these cases, we include an explicit $\omega$-label to the notation of our previously defined quantities in order to indicate the analogous property in the case of an initial state $\omega^{AB}$; for example, Eq.~\eqref{eq:xab} becomes 
\begin{equation}\label{eq:x_arbitrary}
X_{V,\omega}^{AB_f}=\sigma_{V,\omega}^{AB_f}-\omega^A\otimes\sigma_{V,\omega}^{B_f},
\end{equation}
and Eq.~\eqref{eq:dg} becomes
\begin{equation}\label{eq:dgw}
\Delta^{G,\omega}_p:=\expect_{V^B\sim \mu_G}\big\|X_{V,\omega}^{AB_f}\big\|_p^2\,.
\end{equation}
However, unless an explicit input label is displayed, we always
refer to the Hayden--Preskill input $\omega_{\mathrm{HP}}^{AB}=\Phi^{MA}\otimes\tau^{B_i}$. In particular,
all recovery statements in this work refer to the Hayden--Preskill input $\omega_{\mathrm{HP}}^{AB}$.

\subsection{Matchgate circuits}\label{subsec:matchgates}

In this section, we review the necessary representation theory of the matchgate group $\mathsf{MG}(n)\cong{\rm Spin}(2n)$, and the associated properties of its commutant. Further details can be found in Refs.~\cite{braccia2026commutant,sierant2026theory}. Throughout this subsection, we will denote our $n$-qubit Hilbert space by $\mch$ with dimension $d:=2^n$. 
We will  use the standard choice,
\begin{align}\label{eq:majos}
\gamma_1&=XIII\cdots I\;,&\gamma_{2}&=YIII\cdots I\nonumber\\
\gamma_3&=ZXII\cdots I\;,&\gamma_{4}&=ZYII\cdots I\\
&\:\:\vdots&&\:\:\vdots\nonumber\\
\gamma_{2n-1}&=ZZ\cdots ZX\;,&\gamma_{2n}&=ZZ\cdots ZY\;,  \nonumber
\end{align}
of Jordan-Wigner Majorana operators, which obey the canonical anticommutation relations $\{\gamma_\mu,\gamma_\nu\}=2\delta_{\mu\nu}\id$. For a subset $S=\{\mu_1<\cdots<\mu_k\}\subseteq [2n]$, we define the Majorana string $\gamma_S=i^{k(k-1)/2}\gamma_{\mu_1}\gamma_{\mu_2}\cdots\gamma_{\mu_k}$, where the factors of $i$ ensure that $\gamma_S$ is Hermitian.  For each $k\in\{0,\ldots,2n\}$, we denote the set of degree-$k$ Majorana monomials by $\mcb_k^{(n)}:=\{\gamma_S:S\subseteq[2n],\ |S|=k\}$, so that $|\mcb_k^{(n)}|=\binom{2n}{k}$; up to signs, these monomials constitute exactly the set of Pauli strings, and thus form a (Hermitian) orthogonal basis of the operator space.

A matchgate unitary $U_O\in\mathrm{Spin}(2n)$\footnote{The subtlety  that the matchgates act as a faithful representation of $SO(2n)$ on the Majoranas but only as a faithful representation of ${\rm Spin}(2n)$ (and a projective representation of $SO(2n)$) on $\mch$ does not cause any trouble in this work.} acts linearly on the Majoranas,
\begin{equation}
U_O\gamma_\mu U_O^\dagger=\sum_{\nu=1}^{2n}O_{\mu\nu}\gamma_\nu \,,\label{eq:majo_linearly}
\end{equation}
for some corresponding $O\in\mbso(2n)$. By multilinearity and the canonical anticommutation relations, this induces the standard exterior-power action
\begin{equation}
U_O\gamma_S U_O^\dagger=\sum_{\substack{T\subseteq[2n]\\ |T|=k}}\det(O_{S,T})\,\gamma_T=\sum_{\substack{T\subseteq[2n]\\ |T|=k}}(\Lambda^k O)_{S,T}\,\gamma_T,\label{eq:ext}
\end{equation}
where $|S|=k$ and $O_{S,T}$ denotes the submatrix of $O$ with rows indexed by $S$ and columns indexed by $T$. Thus, the subspace $V_k:=\operatorname{span}\mathcal B_k^{(n)}$ is invariant under matchgate conjugation and carries the exterior-power representation $\Lambda^k(\mathbb C^{2n})$. Consequently,
\begin{equation}
    \operatorname{End}(\mch)\cong\bigoplus_{k=0}^{2n}\Lambda^k(\mathbb C^{2n})
\end{equation}
as a $\mathrm{Spin}(2n)$ representation under conjugation.

Now, we use two representation theoretic facts. The first is that the $V_k:=\Lm^k(\mathbb C^{2n})$ are, for $k\neq n$, irreducible and pairwise non-isomorphic except to their respective \textit{Hodge duals}, $\Lm^k(\mathbb C^{2n})\cong \Lm^{2n-k}(\mathbb C^{2n})$; when $k=n$ one has the decomposition $V_n\cong V_{n,+}\oplus V_{n,-}$ into two non-isomorphic irreps corresponding to the kernels of $(\id\pm Z\tn)\vert_{V_n}$~\cite{fulton1991representation}.
The second fact that we need is that, by Schur's lemma~\cite{fulton1991representation}, the second order commutant of the matchgate group is fixed by the previous discussion to be
\begin{equation}
    \mcc^{(2)}_{\mg} \cong \bigoplus_{k=0}^{n-1} \left[\id_{V_k} \ot {\rm End}\, \mbc^2\right] \oplus \mbc\id_{V_{n,+}} \oplus \mbc\id_{V_{n,-}}\, ,\label{eq:comm}
\end{equation} 
and is evidently thereby of dimension $4n+2$. 

Let $\Pi_n:=Z^{\otimes n}$ be the fermionic parity operator. An orthonormal basis of this space that will turn out to be particularly convenient for us is given by~\cite{diaz2023showcasing,west2025graph}
\begin{align}
    Q_{k,0}&:=\frac{1}{d\sqrt{\binom{2n}{k}}}\sum_{P\in \mathcal{B}_k^{(n)}}P\otimes P,\label{eq:comm_basis_0}\\
    Q_{k,1}&:=\frac{i^{k\bmod 2}}{d\sqrt{\binom{2n}{k}}}\sum_{P\in \mathcal{B}_k^{(n)}}P\otimes\Pi_n P,\label{eq:comm_basis_1}
\end{align}
for $k=0,\ldots,2n$, where $d=2^n$ and the phase in $Q_{k,1}$ ensures that these operators are Hermitian also for odd $k$.
We will also make use of the so-called $\ell$\textsuperscript{th}-order \textit{frame potential} $F_{G}^{(\ell)}$, which for  a compact group $G$ is given by~\cite{roberts2017chaos, mele2023introduction}
\begin{equation}\label{eq:fp}
F_{G}^{(\ell)} = \expect_{U,V\sim \mu_G} \left[ \left\lvert \tr\left(UV^\dg\right)\right\rvert^{2\ell}\right].
\end{equation}
$F_{G}^{(\ell)}$  measures the 2-norm  distance between the $\ell$-fold moment operators of $\mu_G$ and that of a unitary $\ell$-design~\cite{mele2023introduction}; one can also show~\cite{west2025graph} that, 
given a decomposition 
\begin{equation}
   \mch^{\otimes \ell} = \bigoplus_{\lm} \bigoplus_{i=1}^{m_{\lm}} \mch_{\lm,i}\;,
\end{equation}
into $G$-irreps, the corresponding  $\ell$\textsuperscript{th}-order frame potential satisfies
\begin{equation}\label{eq:fpr}
F_{G}^{(\ell)} = \sum_{\lm}  m_{\lm}^2,
\end{equation}
with $ m_{\lm}$ the multiplicity of the irrep $\lm$ in $\mch^{\otimes \ell}$.  Combining these observations yields that $F^{(2)}_{\mg} = 4n+2$. As it follows immediately from Eq.~\eqref{eq:fpr} that $F^{(2)}_{\mbu(d)} = 2$, we see that for $n\gg 1$ the matchgate group is far from being a unitary 2-design.

\section{Free fermionic  black holes}\label{sec:ffbh}

Suppose that the nearest large black hole into which Alice can toss her message happens to be a  free fermionic black hole. That is, the black hole dynamics are modelled by a unitary $V^B$ drawn Haar randomly from the matchgate group $\mgb$. We fix the input and output Jordan--Wigner orderings such that $M$ and $B_f$ occupy the first $m$ input modes and the first $n_{B_f}$ output modes, respectively.
As discussed in the preliminaries, Bob's ability to recover Alice's message will primarily depend on the free fermionic black hole's ability to decouple the reference system $A$ from the remaining black hole $B_f$. To gain structural insight into its decoupling behaviour, we begin by evaluating the averaged Hilbert--Schmidt correlation $\Delta_2^{\mgb,\omega}$ (recall Eqs.~\eqref{eq:x_arbitrary} and~\eqref{eq:dgw}) for an arbitrary input state $\omega^{AB}$. 
In this pursuit, it is useful to define a few quantities. First, we introduce
\begin{align}
    p_k(n_{B_f},n_B)&=\frac{\binom{2n_{B_f}}{k}}{\binom{2n_B}{k}}\sim (n_{B_f}/n_B)^k,\label{eq:p}
\end{align}
where the final asymptotic relation holds for a fixed $k$ as $n_B,n_{B_f}\to\infty$ with $n_{B_f}/n_B$ fixed.
$p_k(n_{B_f},n_B)$ is exactly the fraction of degree-$k$ Majorana monomials whose  support is contained within $B_f$, and is thus the fraction of the (2-norm) weight of a degree-$k$ Majorana monomial that finds itself supported on $B_f$ after the action of a random  matchgate.   Secondly, we introduce
\begin{align}\label{eq:wdef}
    \mcw_k(\omega)&=\sum_{P\in \mcb_k^{(n_B)}}\left\|\operatorname{Tr}_B\left[D^{AB}(\id_A\otimes P)\right]\right\|_2^2,
\end{align}
where $D^{AB}:=\omega^{AB}-\omega^A\otimes\omega^B$ is the difference of the initial state from its marginals. Thus, $\mcw_k(\omega)$ quantifies the initial $A:B$ correlation weight carried by the degree-$k$ Majorana sector.

With these ingredients in hand, we find
\begin{restatable}{thm}{thmmain}\label{thm:main} 
    For an arbitrary input state $\omega^{AB}$, the decoupling error $\Delta_2^{\mgb,\omega}$ admits the  decomposition
    \begin{equation}\label{eq:thm2}
    \Delta_2^{\mgb,\omega}=\frac{1}{d_{B_f}}\sum_{k=1}^{2n_{B_f}}p_k(n_{B_f},n_B)\,\mcw_k(\omega). 
    \end{equation}
\end{restatable}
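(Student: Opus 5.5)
Throughout, write $V=V^B$ and $d=d_B$; all Schatten norms are normalised as in the paper. The plan is to expand everything in the Majorana (Pauli) basis on $B$ and average only the resulting second-moment quantity over $\mgb$, using the exterior-power action Eq.~\eqref{eq:ext}.

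\textbf{Step 1: express $X$ through $D^{AB}$.} First I would note that $X_{V,\omega}^{AB_f}=\Tr_R[(\id_A\otimes V)D^{AB}(\id_A\otimes V^\dagger)]$. Indeed, $\Tr_R$ of the conjugated product term $\omega^A\otimes\omega^B$ is $\omega^A\otimes\sigma^{B_f}_{V,\omega}$, by linearity.

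\textbf{Step 2: expand and take the partial trace.} Expand
\[
D^{AB}=\frac{1}{d}\sum_{S}D_S\otimes\gamma_S,\qquad D_S=\Tr_B[D^{AB}(\id_A\otimes\gamma_S)].
\]
The $S=\emptyset$ term vanishes, since $\Tr_B D^{AB}=0$. Conjugation gives
\[
V\gamma_S V^\dagger=\sum_{|T|=k}(\Lambda^kO)_{S,T}\gamma_T .
\]
Taking $\Tr_R$ kills every $\gamma_T$ that is not supported on $B_f$. By the Jordan--Wigner ordering, the monomials supported on $B_f$ are exactly those with $T\subseteq[2n_{B_f}]$, and they reduce to $d_R\,\gamma_T^{B_f}$. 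Hence
\[
X=\frac{d_R}{d}\sum_{k\ge1}\sum_{|S|=|T|=k,\ T\subseteq[2n_{B_f}]}(\Lambda^kO)_{S,T}\,D_S\otimes\gamma_T^{B_f}.
\]

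\textbf{Step 3: compute the Hilbert--Schmidt norm.} Using $\Tr[\gamma_T\gamma_{T'}]=d_{B_f}\delta_{TT'}$ on $B_f$,
\[
\|X\|_2^2=\frac{d_R^2d_{B_f}}{d^2}\sum_T\Big\|\sum_S(\Lambda^kO)_{S,T}D_S\Big\|_2^2 .
\]
I would expand the square as a double sum over $S,S'$ of $(\Lambda^kO)_{S,T}\overline{(\Lambda^kO)_{S',T}}\,\Tr[D_S^\dagger D_{S'}]$. The entries of $\Lambda^kO$ are real.

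\textbf{Step 4: average over $\mgb$.} This is the only real input, and the main point to verify. By Schur orthogonality for the real irrep $\Lambda^k(\mbc^{2n_B})$,
\[
\expect_O\big[(\Lambda^kO)_{S,T}(\Lambda^{k'}O)_{S',T}\big]=\delta_{kk'}\,\delta_{SS'}\Big/\binom{2n_B}{k}.
\]
I have to check that no cross terms survive. For $k\neq k'$ with $k'=2n_B-k$ (the Hodge dual pairing), the two indices $T$ have different sizes, so they cannot coincide, and the cross terms vanish automatically. At $k=n_B$ the representation splits into $V_{n,\pm}$. Each half is orthogonal, so summing its matrix-coefficient contributions still yields $\delta_{SS'}$ times the same normalisation. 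This is the step I would check carefully, perhaps by computing instead $\expect[(\Lambda^kO)\otimes(\Lambda^kO)]$ as the projector onto invariants, which equals the contraction of the commutant element $Q_{k,0}$.

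\textbf{Step 5: assemble.} After the average,
\[
\expect\|X\|_2^2=\frac{d_R^2d_{B_f}}{d^2}\sum_k\frac{\binom{2n_{B_f}}{k}}{\binom{2n_B}{k}}\sum_{|S|=k}\|D_S\|_2^2=\frac{d_R^2d_{B_f}}{d^2}\sum_k p_k\,\mcw_k(\omega).
\]
Since $d=d_Rd_{B_f}$, the prefactor is $1/d_{B_f}$. The sum over $k$ runs to $2n_{B_f}$ because $p_k=0$ beyond that. This is exactly Eq.~\eqref{eq:thm2}.
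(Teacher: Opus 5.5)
Your proposal is correct, and it reaches Eq.~\eqref{eq:thm2} by a different, though closely related, route from the paper.

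\textbf{Comparison with the paper.} The paper never expands $X_{V,\omega}^{AB_f}$ itself. It uses the swap trick to write $\Delta_2^{\mgb,\omega}=\tr[(D^{AB})^{\otimes2}(\mbs_A\otimes\expect[((V^B)^\dagger)^{\otimes 2}\widetilde{\mbs}_{B_f}(V^B)^{\otimes2}])]$. It then projects the partial swap onto the two-copy commutant using the orthonormal basis $\{Q_{k,0},Q_{k,1}\}$, checks that the parity-twisted coefficients $\tr[Q_{k,1}\widetilde{\mbs}_{B_f}]$ vanish, and reassembles with a second swap trick. You instead stay at the single-copy level: you push each Majorana component $D_S$ through $\Lambda^kO$ and average a quadratic form in matrix coefficients with a fixed column $T$. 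The two computations are dual, since $\expect[\Lambda^kO\otimes\Lambda^kO]$ is exactly the projector onto the invariant vectors that the $Q$'s encode. Your version, however, needs no explicit basis of the $(4n_B+2)$-dimensional commutant. It also makes the origin of $p_k$ transparent: $\binom{2n_{B_f}}{k}$ of the $\binom{2n_B}{k}$ output monomials survive $\tr_R$, each with average squared amplitude $1/\binom{2n_B}{k}$. Hodge-dual cross terms never arise, because the shared column index fixes the degree. In exchange, the paper's route treats all sectors uniformly once the commutant basis is known, and it reuses the template of Eq.~\eqref{eq:pure_swap}.

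\textbf{The step at $k=n_B$.} You rightly flagged this step, and its justification needs tightening. Here the commutant of $\Lambda^{n_B}O$ is two-dimensional, so $\expect[\Lambda^{n_B}O\otimes\Lambda^{n_B}O]$ is \emph{not} just the contraction of $Q_{n_B,0}$. The parity-twisted invariant (the analogue of $Q_{n_B,1}$) pairs $S$ with $S^c:=[2n_B]\setminus S$, and indeed $\expect[(\Lambda^{n_B}O)_{S,T}(\Lambda^{n_B}O)_{S^c,T^c}]\neq0$.

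The correct argument goes as follows. Let $e_T$ be the normalised vector corresponding to $\gamma_T$. Let $P_\pm=\tfrac12(\id\pm L_\Pi)$ be the projectors onto the two halves $V_{n_B,\pm}$, where $L_\Pi$ is left multiplication by $\Pi_{n_B}$; it sends $e_T$ to a unit-modulus multiple of $e_{T^c}$. Schur's lemma gives $\expect[\Lambda^{n_B}O\,e_Te_T^{\mathsf T}(\Lambda^{n_B}O)^{\mathsf T}]=\sum_\pm\langle e_T,P_\pm e_T\rangle P_\pm/\dim V_{n_B,\pm}$.

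Each half separately contributes nonzero $(S,S^c)$ entries, so ``each half is orthogonal'' is not what removes them. They cancel because $L_\Pi$ has zero diagonal in the monomial basis: $\tr[\gamma_T\Pi_{n_B}\gamma_T]=\pm\tr\Pi_{n_B}=0$. This gives $\langle e_T,P_\pm e_T\rangle=\tfrac12$ and $\dim V_{n_B,\pm}=\tfrac12\binom{2n_B}{n_B}$, so the average is exactly $\id/\binom{2n_B}{n_B}$. Its $(S,S')$ entry is your $\delta_{SS'}/\binom{2n_B}{n_B}$. This is precisely the counterpart of the paper's $\tr[Q_{k,1}\widetilde{\mbs}_{B_f}]=0$, which rests on the same vanishing trace. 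With this in place, your Step 4 holds for every $k$ and the proof goes through.
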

The proof of Theorem~\ref{thm:main} (see Appendix~\ref{app:polarised_mirror}) is a relatively simple exercise in the Weingarten calculus on the matchgate group; one essentially employs the rearrangement of Figure~\ref{fig:purity} (albeit with minor complications introduced by the fact that $\omega^{AB}$ need not be the maximally entangled input state), and evaluates the resulting second-order integral using the matchgate commutant basis introduced in Eqs.~\eqref{eq:comm_basis_0} and~\eqref{eq:comm_basis_1}.

Let us try to interpret Theorem~\ref{thm:main}, recalling that a small value of $d_Ad_{B_f}\Delta_2^{\mgb,\omega}$ corresponds to successful decoupling. When is the right hand side of Eq.~\eqref{eq:thm2} small? First, from Eq.~\eqref{eq:wdef}, $\mcw_k(\omega)$  can itself be zero if the initial state is already fully decoupled, which leads (not unreasonably) to the black hole successfully ``fully decoupling'' the already-decoupled state. 
Less trivially, since $n_{B_f}<n_B$ and recalling Eq.~\eqref{eq:p}, the factor $({n_{B_f}}/{n_B})^k$ suppresses any correlations across $A:B$ present in the initial state, with the extent of the suppression dictated by the degree $k$. 
%this ratio is strictly smaller than one. Consequently, the survival factor $p_k(n_{B_f},n_B)$ decreases monotonically with the Majorana degree-$k$. 
It is in this sense that we say that the free fermionic black holes   act as  \emph{polarised} mirrors. Correlations in high-degree sectors are quickly suppressed, posing no difficulty to recovery, but correlations in low-degree Majorana sectors, as quantified by $\mcw_k$, are preferentially retained in the remaining black hole.

%Back to maximally entangled
To connect the preceding decoupling result to Bob's ability to recover \emph{any} message from free fermionic black holes, we now specialise, as discussed in Section~\ref{sec:prelim}, to Alice's  input state being maximally entangled; i.e., we take $\omega_{\mathrm{HP}}^{AB}=\Phi^{MA}\otimes\tau^{B_i}$. So, from Eq.~\eqref{eq:thm2}, it remains to evaluate the corresponding degree-resolved correlation weights; 
we find 
\begin{restatable}{lem}{lemw}\label{lem:w} 
The  state $\omega^{AB}_{\mathrm{HP}}$ satisfies $\mcw_k(\omega^{AB}_{\mathrm{HP}})=\frac{1}{d_M}\binom{2m}{k}$ for $1\le k\leq 2m$, while $\mcw_k(\omega^{AB}_{\mathrm{HP}})=0$ for $k>2m$.
\end{restatable}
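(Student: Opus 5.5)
We compute $\mcw_k(\omega_{\mathrm{HP}})$ directly from the definition in Eq.~\eqref{eq:wdef}, exploiting the product structure of $\omega^{AB}_{\mathrm{HP}}=\Phi^{MA}\otimes\tau^{B_i}$ and the fixed Jordan--Wigner ordering in which $M$ occupies the first $m$ modes of $B$.

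The first step is to identify $D^{AB}$. Since $\omega^A=\tau^A$ and $\omega^B=\tau^M\otimes\tau^{B_i}=\tau^B$, we have
\begin{equation*}
D^{AB}=(\Phi^{MA}-\tau^A\otimes\tau^M)\otimes\tau^{B_i}.
\end{equation*}
Expanding the Bell state in the Pauli basis gives $\Phi^{MA}=\frac{1}{d_M^2}\sum_{Q}Q^{A}\otimes \bar Q^{M}$, summed over the $4^m$ Pauli strings $Q$ on $m$ qubits, with $\bar Q=\pm Q$ the transpose. Hence $D^{AB}=\frac{1}{d_M^2}\sum_{Q\neq \id}Q^{A}\otimes \bar Q^{M}\otimes\tau^{B_i}$.

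The second step is to classify the degree-$k$ Majorana monomials $P\in\mcb_k^{(n_B)}$ on $B$. Because the first $2m$ Majoranas $\gamma_1,\dots,\gamma_{2m}$ act only on $M$, any $P=\gamma_S$ with $S\subseteq[2m]$ is of the form $P_M\otimes\id_{B_i}$, where $P_M$ is a non-identity Pauli string on $M$ (for $k\ge1$). Monomials with $S\cap\{2m+1,\dots,2n_B\}\neq\emptyset$ act as a non-identity Pauli on $B_i$, possibly dressed with a Jordan--Wigner $Z$-string on $M$. For these, the factor $\tau^{B_i}$ makes the $B_i$ partial trace vanish, so they contribute nothing. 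Consequently $\mcw_k=0$ for $k>2m$, because every such $S$ must reach outside $[2m]$.

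The third step handles $S\subseteq[2m]$, where $\gamma_S$ restricted to $M$ is, up to sign, a distinct Pauli string $Q_S\neq\id$. Orthogonality of Paulis under the trace gives
\begin{equation*}
\operatorname{Tr}_B[D^{AB}(\id_A\otimes P)]=\frac{1}{d_M^2}\,(\pm Q_S^A)\,\operatorname{Tr}[Q_S^2]=\pm\frac{1}{d_M}Q_S^A.
\end{equation*}
Its squared Hilbert--Schmidt norm is therefore $\frac{1}{d_M^2}\operatorname{Tr}[\id_A]=\frac{1}{d_M}$. Summing over the $\binom{2m}{k}$ subsets $S\subseteq[2m]$ with $|S|=k$ yields $\mcw_k=\frac{1}{d_M}\binom{2m}{k}$ for $1\le k\le2m$.

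The main point requiring care is the bookkeeping in the second step. One must verify that the Jordan--Wigner strings never put a non-identity operator on $B_i$ for $S\subseteq[2m]$, which holds because the $Z$-tails only extend to modes with smaller index. One must also verify that every monomial with an index beyond $2m$ has a non-identity factor on $B_i$, which holds because its largest index lands there. The signs and phases in $\gamma_S$ and in $\bar Q$ play no role, since only squared norms enter.
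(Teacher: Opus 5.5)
Your proof is correct and follows essentially the same route as the paper: both write $D^{AB}=(\Phi^{MA}-\tau^A\otimes\tau^M)\otimes\tau^{B_i}$, discard every monomial with a Majorana index outside $[2m]$ because of the $\tau^{B_i}$ factor, and count the $\binom{2m}{k}$ monomials supported on $M$, each contributing weight $1/d_M$. The only cosmetic difference is that you get the per-monomial weight by expanding $\Phi^{MA}$ in the Pauli basis, whereas the paper uses the identity $\tr_M[\Phi^{MA}(P_M\otimes\id_A)]=P_A^{\mathsf T}/d_M$ together with $\tr P_M=0$.
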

Crucially, the low-degree correlations of $\omega^{AB}_{\rm HP}$ therefore do not vanish, and will act as obstructions to fast information recovery from a  free fermionic black hole. 
This allows us to construct an explicit lower bound on $\Delta_1^{\mgb}$. Recalling the duality between the trace norm and the operator norm for every Hermitian operator $Y_{V}$ satisfying $\|Y_{V}\|_{\infty}\le 1$, we have $\|X_{V}^{AB_f}\|_1\ge\lvert\tr\big[Y_{V}X_{V}^{AB_f}\big]\rvert$. Since  $\tr\,(Y_{V}X_{V}^{AB_f}\,)=\tr\,(Y_{V}\sigma_{V}^{AB_f})-\tr\,(Y_{V}(\tau^A\otimes\tau^{B_f}))$, we see that $Y_{V}$ acts as  a ``correlation witness'': its overlap with $X_{V}^{AB_f}$ measures the difference between its expectation value in the actual output state and in the corresponding product state. 

So, a good lower bound on $\Delta_1^{\mgb}$ may be obtained by finding an appropriate witness observable $Y_{V}$. Since we learned from Theorem~\ref{thm:main} that degree-one correlations will be suppressed slowest, 
%and the maximally entangled input state $\Phi^{MA}$ contains perfectly normalised correlations between each message Majorana and its partner on $A$, 
we use a witness that detects a degree-one Majorana correlation. With the details given in Appendix~\ref{app:no_recovery}, this leads to:

\begin{restatable}{lem}{lemhp}\label{lem:mg_lower_bound} 
The mean-squared decoupling error over $V^B\sim\mu_{\mgb}$ satisfies $\Delta_1^{\mgb}\geq1-{r}/{n_B}$.
\end{restatable}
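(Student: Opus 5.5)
The plan is to lower bound $\|X_V^{AB_f}\|_1$ pointwise in $V^B$ with a degree-one correlation witness $Y_V$, square the bound, and then average over the matchgate group. Averaging the squared pointwise bound directly gives a lower bound on $\Delta_1^{\mgb}=\expect\|X_V^{AB_f}\|_1^2$, so no Jensen step is needed. Write $U_O=V^B$ with $O\in\mbso(2n_B)$. Use the fixed Jordan--Wigner orderings, so that $\gamma_1,\dots,\gamma_{2m}$ are supported on $M$ and $\gamma_1,\dots,\gamma_{2n_{B_f}}$ (as output operators) are supported on $B_f$.

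\textbf{Constructing the witness.} Let $o_\nu:=O_{1\nu}$ and $s:=\big(\sum_{\nu\le 2n_{B_f}}o_\nu^2\big)^{1/2}$. Define
\begin{equation*}
c_V:=\frac{1}{s}\sum_{\nu\le 2n_{B_f}}o_\nu\gamma_\nu,\qquad Y_V:=\gamma_1^{T}\otimes c_V ,
\end{equation*}
where $\gamma_1^T$ is the transpose of the first message Majorana, viewed as an operator on $A$. The canonical anticommutation relations give $c_V^2=\id$, so $c_V$ is a Hermitian unitary. Hence $\|Y_V\|_\infty=1$. Since $c_V$ is traceless, $\tr[Y_V(\tau^A\otimes\tau^{B_f})]=0$, and therefore $\|X_V^{AB_f}\|_1\ge|\tr[Y_V\sigma_V^{AB_f}]|$.

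\textbf{Evaluating the overlap.} Using Eq.~\eqref{eq:sigma_abf}, move $V^B$ onto the observable:
\begin{equation*}
\tr[Y_V\sigma_V^{AB_f}]=\tr\big[(\gamma_1^T\otimes V^{B\dagger}(c_V\otimes\id_R)V^B)\,\omega_{\rm HP}^{AB}\big].
\end{equation*}
Because $O$ is orthogonal, Eq.~\eqref{eq:majo_linearly} inverts to $V^{B\dagger}\gamma_\nu V^B=\sum_\mu O_{\mu\nu}\gamma_\mu$. Next evaluate $\tr[(\gamma_1^T\otimes\gamma_\mu)\,\Phi^{MA}\otimes\tau^{B_i}]$:
\begin{itemize}[label={}]
\item For $\mu\le 2m$, $\gamma_\mu$ lives on $M$, and the Bell identity $(P\otimes\id)\ket{\Phi}=(\id\otimes P^T)\ket{\Phi}$ together with Pauli orthogonality gives $\delta_{\mu1}$.
\item For $\mu>2m$, $\gamma_\mu$ carries a nontrivial Pauli on $B_i$, which is traced against $\tau^{B_i}$ and gives zero.
\end{itemize}
Hence the overlap equals $\frac1s\sum_{\nu\le 2n_{B_f}}O_{1\nu}^2=s$, and
\begin{equation*}
\|X_V^{AB_f}\|_1^2\ge\sum_{\nu\le 2n_{B_f}}O_{1\nu}^2 .
\end{equation*}

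\textbf{Averaging.} The Haar measure on $\mgb\cong{\rm Spin}(2n_B)$ pushes forward to the Haar measure on $\mbso(2n_B)$. Under it, the first row of $O$ is uniform on the unit sphere $S^{2n_B-1}$, so $\expect O_{1\nu}^2=1/(2n_B)$. This gives
\begin{equation*}
\Delta_1^{\mgb}\ge \frac{2n_{B_f}}{2n_B}=1-\frac{r}{n_B}.
\end{equation*}

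The main point needing care is the overlap evaluation: the sign and transpose conventions in the Bell-state identity, the Hermiticity phases, and checking that the input and output Jordan--Wigner orderings really place the first $2n_{B_f}$ output Majoranas on $B_f$ without Jordan--Wigner strings leaking into $R$. This holds because each string $\gamma_\nu$ extends only to the left, and $B_f$ occupies the first $n_{B_f}$ output modes. The normalisation $c_V^2=\id$ and the spherical average are routine.
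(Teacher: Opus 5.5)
Your proposal is correct and follows essentially the same route as the paper. The paper likewise uses the degree-one witness $\gamma_1^A\otimes\hat\Gamma_{V,f}$; your $\gamma_1^{T}$ coincides with $\gamma_1$ since $X^{T}=X$, and your Heisenberg-picture evaluation of the overlap is a cleaner version of the paper's tensor-network computation. It then reaches the same pointwise bound $\|X_V^{AB_f}\|_1^2\ge s_V^2$ and takes the same $\mbso(2n_B)$ average $\expect O_{1\nu}^2=1/(2n_B)$. The only thing to add is the measure-zero case $s_V=0$: there $c_V$ is undefined, but the bound $\|X_V^{AB_f}\|_1^2\ge 0=s_V^2$ holds trivially, as the paper notes.
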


Combining Lemma~\ref{lem:mg_lower_bound} with Eq.~\eqref{eq:fdg} already confirms the inability of free fermionic black holes to reflect information for $r\ll n_B$. However, using Lemma~\ref{lem:mg_lower_bound}, we can considerably strengthen the upper bound on the recovery fidelity. For convenience, define
\begin{equation}\label{eq:r}
R_{\mathrm{MG}}:=d_Ad_{B_f}\Delta_2^{\mgb}=\sum_{k=1}^{2m}p_k(n_{B_f},n_B)\binom{2m}{k}.
\end{equation}

\noindent
We then finally find
\begin{restatable}{thm}{thmmgfidelity}\label{thm:main_3} 
For a free fermionic black hole, the average optimal recovery fidelity for Alice's maximally entangled input state $\Phi^{MA}$ satisfies
\begin{align}
    \frac{1}{1+R_{\mathrm{MG}}}\le\expect_{V^B\sim\mu_{\mgb}}F_{\mathrm{rec}}(V^B)\leq \frac{1}{2}\left(1+\sqrt{\frac{r}{n_B}}\right).
\end{align}
\end{restatable}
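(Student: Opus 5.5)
The lower bound follows at once from earlier results. Take $\mu=\mu_{\mgb}$ in the left inequality of Theorem~\ref{thm:new}. Then note that $d_Ad_{B_f}\Delta_2^{\mgb}=R_{\mathrm{MG}}$. To see this, apply Theorem~\ref{thm:main} with $\omega=\omega_{\mathrm{HP}}$ and substitute Lemma~\ref{lem:w}. This gives
\begin{equation*}
\Delta_2^{\mgb}=\frac{1}{d_{B_f}d_M}\sum_{k=1}^{2m}p_k(n_{B_f},n_B)\binom{2m}{k}.
\end{equation*}
Since $d_A=d_M$, this equals $R_{\mathrm{MG}}/(d_Ad_{B_f})$. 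The sum truncates at $2m\le 2n_{B_f}$ because $\mcw_k$ vanishes for $k>2m$, and we assume $m\ll n_{B_f}$.

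For the upper bound we combine Eq.~\eqref{eq:fdg} with Lemma~\ref{lem:mg_lower_bound}. Squaring Eq.~\eqref{eq:fdg} pointwise gives $\|X_V^{AB_f}\|_1^2\le 16\,(1-F_{\mathrm{rec}}(V^B))$. Averaging and using Lemma~\ref{lem:mg_lower_bound} would then yield $1-\expect F_{\mathrm{rec}}\ge (1-r/n_B)/16$. This has the right flavour but is weaker than the claimed $\frac12(1+\sqrt{r/n_B})$.

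To reach the stated constant, I would go back to the witness argument behind Lemma~\ref{lem:mg_lower_bound} and work directly with fidelity rather than trace distance. By Lemma~\ref{lem:sep},
\begin{equation*}
F_{\mathrm{rec}}(V^B)=\max_\zeta F\big(\sigma_V^{AB_f},\tau^A\otimes\zeta\big).
\end{equation*}
Take the degree-one witness $Y_V$ built from a message Majorana paired with its partner on $A$. Its expectation on any product state $\tau^A\otimes\zeta$ vanishes, because $\tau^A$ gives zero on traceless Paulis on $A$. Its expectation on $\sigma_V^{AB_f}$ is some $c_V$. The argument behind Lemma~\ref{lem:mg_lower_bound} should give $\expect c_V^2\ge 1-r/n_B$, since the weight of a rotated Majorana on $B_f$ is $\sum_{\nu\in B_f}O_{1\nu}^2$ with mean $n_{B_f}/n_B$.

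Next, apply the two-outcome measurement $\{(\id\pm Y_V)/2\}$. This requires $Y_V$ to have eigenvalues $\pm1$; a single Majorana tensored with a Pauli on $A$ does, or one can use a spectral projector. By monotonicity of fidelity under measurement, $F(\rho,\sigma)\le F_{\mathrm{cl}}(p,q)$. The product state gives $q=(1/2,1/2)$ and $\sigma_V$ gives $p=((1+c)/2,(1-c)/2)$. Then
\begin{equation*}
F_{\mathrm{cl}}=\tfrac14\big(\sqrt{1+c}+\sqrt{1-c}\big)^2=\tfrac12\big(1+\sqrt{1-c^2}\big).
\end{equation*}
Finally, Jensen's inequality for the concave map $x\mapsto\sqrt{1-x}$ in $x=c_V^2$ gives
\begin{equation*}
\expect F_{\mathrm{rec}}\le\tfrac12\big(1+\sqrt{1-\expect c_V^2}\big)\le\tfrac12\big(1+\sqrt{r/n_B}\big).
\end{equation*}

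The main obstacle is establishing $\expect c_V^2\ge 1-r/n_B$ for a witness with $\pm1$ spectrum. The optimal witness may have to be chosen $V$-dependently, as the $V$-subscript suggests. One candidate is the normalised image $\sum_{\nu\le 2n_{B_f}}O_{1\nu}\gamma_\nu/\|\cdot\|$ tensored with the transpose partner on $A$. This squares to the identity by the anticommutation relations, and its correlation is $c_V^2=\sum_{\nu\in B_f}O_{1\nu}^2$. Averaging over the uniformly random row of $O\in\mbso(2n_B)$ then gives exactly $n_{B_f}/n_B=1-r/n_B$. I would verify this normalisation first.
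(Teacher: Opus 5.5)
Your proposal is correct and follows the paper's proof essentially step for step. The lower bound comes from Theorem~\ref{thm:new} with $R_{\mathrm{MG}}$ obtained via Theorem~\ref{thm:main} and Lemma~\ref{lem:w}; the upper bound uses the $V$-dependent witness $Y_V=\gamma_1^A\otimes\hat\Gamma_{V,f}$ (squaring to the identity, with expectation $s_V$ on $\sigma_V^{AB_f}$ and zero on every $\tau^A\otimes\zeta^{B_f}$), then the two-outcome measurement bound $\tfrac12\bigl(1+\sqrt{1-s_V^2}\bigr)$, then Lemma~\ref{lem:sep}, and finally Jensen with $\expect s_V^2=n_{B_f}/n_B$. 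Two small points: when $s_V=0$ your normalisation is undefined, but the pointwise bound is then $1$ and holds trivially; and the lower bound needs no assumption $m\ll n_{B_f}$, since $p_k(n_{B_f},n_B)=0$ for $k>2n_{B_f}$ anyway.
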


In particular, in the early-radiation regime $r=o(n_B)$, even for a computationally unrestricted Bob with access to all early and newly emitted radiation, the optimal fidelity satisfies 
\begin{align}
    \expect_{V^B\sim\mu_{\mgb}}F_{\mathrm{rec}}(V^B)\leq\frac{1}{2}+o(1).
\end{align}
Thus, free fermionic black holes fail to act as Hayden--Preskill mirrors because their degree-preserving dynamics obstruct the suppression of low-degree Majorana correlations. Consistent with this picture, the lower bound of Theorem~\ref{thm:main_3} guarantees recovery with fidelity approaching one whenever $R_{\mathrm{MG}}\rightarrow 0$. Achieving this limit, however, requires Bob to wait until the black hole has almost completely evaporated and to collect essentially all of the emitted radiation.

Finally, let us recall that the Haar-random decoupling factor is suppressed by the number of emitted qubits as $4^{-r}$. In contrast we have here that in the early radiation regime, the suppression factor of the degree-$k$ Majorana sector can be written as 
\begin{equation*}
p_k(n_{B_f},n_B)=\frac{\binom{2n_{B}-k}{2r}}{\binom{2n_B}{2r}}\approx \left(1-\frac{k}{2n_B}\right)^{2r}.
\end{equation*}
Requiring that this reproduces the $2^{-2r}=4^{-r}$ suppression implies
\begin{align}\label{eq:reproduce_haar}
    \left(1-\frac{k_\star}{2n_B}\right)^{2r}\sim 2^{-2r},
\end{align}
which forces $k_\star\approx n_B$. Thus, a Haar-like decoupling requires message correlation to be concentrated in Majorana sectors of extensive degree $k=\Omega(n_B)$. In this case Theorem~\ref{thm:main} combined with Theorem~\ref{thm:new} would guarantee a Haar-like lower bound on the recovery fidelity. 
However, due to the initial black hole being maximally entangled with the early radiation $E$, we have for any initial state $\omega^{AB}=\rho^{AM}\otimes\tau^{B_i}$ that $\omega^{AB}\rvert_{B_i}\propto\id_{B_i}$, with the Pauli expansion of $\omega^{AB}$ therefore being non-trivial only on the $m$ message qubits. This implies that $\omega^{AB}$ has support on sectors of degree at most $2m$; for $m\ll n_B$, we therefore never achieve a Haar-like decoupling, even when distributing correlations differently from the Bell state.

\section{A non-Gaussian horizon}\label{sec:nongaussianhorizon}
As discussed in the previous section, the fundamental obstruction to matchgate black holes efficiently reflecting information is the fact that the input state to the black hole is effectively supported only on irreps of Majorana degree at most $2m\ll n_B$. Let us then consider a somewhat more general model, wherein the interior dynamics consist of the application of a (fixed) Clifford unitary $C$, followed by a Haar-random matchgate. In the following, we denote by $\mu_{\mgb C}$ the pushforward of $\mu_{\mgb}$ under the map $V^B\mapsto V^BC$. Thus, drawing $U^B\sim\mu_{\mgb C}$ means drawing $V^B\sim\mu_{\mgb}$ and setting $U^B=V^BC$. In light of Eq.~\eqref{eq:reproduce_haar}, in order to maximise the reflective ability of the black hole, we therefore seek a Clifford $C$ that maps the message Paulis into sectors of extensive Majorana degree; let us verify that this is possible\footnote{That there is something at all to verify is a result of the fact that the (adjoint) action of the Cliffords on the Paulis (and therefore our Majoranas) is not an arbitrary permutation, but rather a map ${\rm Ad}_C\in \mbsp(2n_B,2)$ that preserves their symplectic structure.}.

In fact, it turns out that most Cliffords have the desired properties detailed in the previous paragraph. To see this, let us take a random one $C$. Up to an irrelevant sign, the image $CPC\ad$ of its action on any fixed (non-identity) Pauli $P$ is uniformly distributed on the set of  $4^{n_B}-1$ (non-identity) Majorana monomials, and can be identified with a (nonempty, but otherwise uniformly random) subset $S\subseteq [2n_B]$, corresponding to the Majoranas which appear in its monomial. Therefore, the degree of the image is essentially\footnote{Again, modulo the impossibility of it being zero; let us henceforth stop mentioning this qualifier} a binomial random variable $K\sim\operatorname{Binomial}(2n_B,1/2)$. Equivalently, with  $X_j\sim \operatorname{Bernoulli}(1/2)$, the random variable describing whether or not the $j$\textsuperscript{th} Majorana appears in the monomial, we have $|S|=\sum_{j=1}^{2n_B}X_j$. For $t>0$, it then follows from Hoeffding's inequality that $\Pr(||S|-n_B|>t)\leq  2e^{-t^2/n_B}$.
Calling a Clifford ``bad'' if it fails to map all of the message Paulis into strings with degrees in the interval $[n_B-t,n_B+t]$, then, we have by the union bound that the probability of a random Clifford being bad is at most $\sim  4^m\exp\left(-{t^2}/{n_B}\right)$, which for constant $m$ is negligible for $t\gg \sqrt{n_Bm}\sim\sqrt{n_B}$. Choosing $t=n_B^{2/3}$, the probability of a Clifford being bad tends to zero as $n_B\to\infty$ for fixed $m$. Hence, for sufficiently large $n_B$, we can certainly find a Clifford which maps all of the message Paulis into strings with degrees in the interval $[n_B-n_B^{2/3},n_B+n_B^{2/3}]$. This would yield by the lower bound of Theorem~\ref{thm:new} and Eq.~\eqref{eq:reproduce_haar} a Haar-like recovery fidelity in the early-radiation regime.

We emphasise that we are  \textit{not} arguing that applying a random Clifford before the matchgate evolution reproduces the unitary case (which is trivially true as a result of the Cliffords forming a unitary 3-design) but rather using a probabilistic argument to show that there exists a fixed (black-hole-defining, if you like) Clifford such that prepending it to the random matchgate evolution reproduces the behaviour of the unitary case (and in fact, this probabilistic argument turns out to show that most Cliffords suffice for this purpose). 
Indeed, to see the nontrivial content of this result one can, for example, note that premultiplying the uniform ensemble over the matchgates by a fixed Clifford $C$ (or indeed any fixed unitary) does not change its frame potential;   from  Eq.~\eqref{eq:fp} we have
\begin{align*}
F_{GC}^{(\ell)} &= \expect_{U,V\sim \mu_G} \Big[ \left\lvert \tr\left((UC)(VC)^\dg\right)\right\rvert^{2\ell}\Big]\\
 &= \expect_{U,V\sim \mu_G} \Big[ \left\lvert \tr\left(U(CC\ad) V^\dg\right)\right\rvert^{2\ell}\Big]\\
 &= F_{G}^{(\ell)}.
\end{align*}
In particular, already at $\ell=2$ we have for large black holes that $F_{\mgb C}^{(2)}=F_{\mgb }^{(2)}=4n_B+2\gg 2 = F_{\mbu(d_B)}^{(2)}$, so that, at least insofar as low values of the  frame potential are agreed to constitute a meaningful diagnostic of scrambling ability~\cite{roberts2017chaos}, our results suggest that strong scrambling is not required to reproduce the main conclusion of the Hayden-Preskill thought experiment. 

In fact, the preceding conclusion can be strengthened further. Let us now construct an explicit Clifford which, when followed by a random matchgate, requires fewer radiation qubits $r$ to achieve a target recovery fidelity than in the $\mbu(d_B)$ case. To that end, it is useful to  represent a Majorana monomial $\gamma_S$ by a vector  $S\in\mbf_2^{2n_B}$, where a $1$ at position $j$ indicates the presence of the $j$\textsuperscript{th} Majorana in the factorisation of $\gamma_S$ (with respect, say, to the ordering of Eq.~\eqref{eq:majos}). Under this encoding, it is not difficult to see that the commutation relations of two Majorana monomials may be represented as~\cite{bravyi2010majorana} $\gamma_S\gamma_T=(-1)^{b_{2n_B}(S,T)}\gamma_T\gamma_S$, where
\begin{align}\label{eq:sympf}
 b_{2n_B}(s,t)=|s|\,|t|+s\cdot t\pmod2\,.
\end{align}
One can think of this as simply a convenient change of basis from the usual representation of an $n_B$-qubit Pauli string as a vector $(u\,|\,v)\in\mbf_2^{2n_B}$ into $X$ and $Z$ incidences, wherein commuting a string $(u\,|\,v)$ past a second string $(u'\,|\,v')$ incurs a phase $(-1)^{uv'+u'v}$ coming from  the canonical symplectic form. Now (as usual) the Clifford group (modulo the Paulis and phases, anyway) is simply given by the symplectic group which preserves the antisymmetric\footnote{Recall that the distinction between symmetric and antisymmetric vanishes over $\mbf_2$} form $b_{2n_B}$, so that what we are looking for is a symplectic injection $T: (\mbf_2^{2m},b_{2m})\hookrightarrow(\mbf_2^{2n_B},b_{2n_B})$ from the subspace corresponding to the Majorana monomials that act trivially outside of the message qubits, into the full $n_B$-qubit black hole\footnote{\textit{a priori} one might be concerned that an arbitrary symplectic injection might not   extend to a symplectic map on the entire space; happily this is however guaranteed by Witt's theorem~\cite{wood1993witts}}. In particular, we are looking for such a map with the additional property that its image consists only of vectors of Hamming weight roughly $n_B$, so as to be rapidly decoupled from $B_f$ by the subsequent random matchgate.

\noindent
To that end, let us consider the  map 
\begin{equation*}
    {\rm ev}:\mbf_2^{2m}\longrightarrow\mbf_2^{d_M^{2}-1}; \qquad {\rm ev}(x)=(a\cdot x)_{a\in\mbf_2^{2m}\setminus\{0\}}\,.
\end{equation*}
That is, $\rm{ev}$ evaluates its input against each of the  nonzero linear functionals on $\mbf_2^{2m}$. Evidently, $\lvert {\rm ev}(x)\rvert = d_M^2/2 $ for all nonzero $x$. Thus, every nonzero input is mapped to a string with $d_M^2/2$ ones among $d_M^2-1$ positions, strictly more than half its length. We are still a few steps away from having the map we want; for a start, we wish to map into $\mbf_2^{2n_B}$, not $\mbf_2^{d_M^{2}-1}$, and for another we need our map to be Clifford.  In fact, we can solve both of these problems by concatenating $\rm{ev}$ an even number of times, appending the input, and then padding with zeros as required. That is, we consider the (family of) maps $C_t:\mbf_2^{2m}\to\mbf_2^{2n_B}$ which act as 
 \begin{equation}
 C_t(x) = ({\rm ev}(x)^{\oplus 2t},\ x,\ 0^{2n_B-(2t(d_M^2-1)+2m)})\,.\label{eq:rep_ev}
\end{equation}
That $C_t$ is symplectic is ensured by construction: indeed, for $x,y\in \mbf_2^{2m}$ we have
\begin{equation*}
\lvert C_t(x)\rvert  = 2t|{\rm ev}(x)|+|x|+0 \equiv|x|  \pmod2\,
\end{equation*}
and similarly $C_t(x)\cdot C_t(y)\equiv x\cdot y \pmod2$, so that by Eq.~\eqref{eq:sympf} we have
\begin{equation*}
    b_{2n_B}(C_t(x), C_t(y))= b_{2m}(x, y),
\end{equation*}
for all $x,y$. By Witt's extension theorem~\cite{wood1993witts}, we can lift this to a symplectic map $\widetilde{C}_t:\mbf_2^{2n_B}\to \mbf_2^{2n_B}$, and therefore to a corresponding $n_B$-qubit Clifford. Finally, let us take $t$ as large as possible, i.e. $t^*=\lfloor(n_B-m)/(4^m-1)\rfloor$ (see Eq.~\eqref{eq:rep_ev}).  

For every nonzero message label $x$, the resulting image has Hamming weight $|C_{t^*}(x)|=4^m t^*+|x|$. Consequently, for fixed $m$, $|C_{t^*}(x)|/n_B$ approaches $4^m/(4^m-1)>1$ as $n_B\to\infty$. Thus, for sufficiently large $n_B$, the corresponding Clifford unitary maps every nontrivial message Pauli to a Majorana monomial of degree strictly greater than $n_B$.

Applying the machinery of Section~\ref{sec:ffbh} and defining 
\begin{equation}
    q_m:=\frac{2(4^m-1)}{4^m-2}\label{eq:def_qm}
\end{equation}
then yields
\begin{restatable}{thm}{thmaugrec}~\label{thm:augrec}
For fixed $m,r$, and $n_B\to \infty$, we have 
\begin{equation}\label{eq:qm}
    d_Ad_{B_f}\Delta_2^{\mgb{C}_{t^*}}\sim (4^m-1)q_m^{-2r}\,.
\end{equation}
\end{restatable}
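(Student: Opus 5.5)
The plan is to apply Theorem~\ref{thm:main} to the transformed input state. Drawing $U^B=V^BC$ with $V^B\sim\mu_{\mgb}$ is the same as running a Haar-random matchgate on the input $\omega_C^{AB}:=(\id_A\otimes C)\,\omega_{\mathrm{HP}}^{AB}\,(\id_A\otimes C^\dagger)$. Hence
\begin{equation*}
\Delta_2^{\mgb C}=\Delta_2^{\mgb,\omega_C}=\frac{1}{d_{B_f}}\sum_{k\ge1}p_k(n_{B_f},n_B)\,\mcw_k(\omega_C).
\end{equation*}
It then remains to compute the degree-resolved weights $\mcw_k(\omega_C)$ and to evaluate $p_k$ asymptotically at the degrees that actually occur.

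For the weights, I would redo the proof of Lemma~\ref{lem:w}. Write $D^{AB}$ for $\omega_{\mathrm{HP}}$; it is $d_M^{-2}d_{B_i}^{-1}\sum_{P\neq\id}\bar P^A\otimes P^M\otimes\id_{B_i}$, summed over the $4^m-1$ nontrivial message Paulis, with $\bar P$ the transpose partner. Conjugation by $C$ sends each $P^M\otimes\id_{B_i}$ to $\pm\gamma_{C_{t^*}(x)}$, where $x\in\mbf_2^{2m}\setminus\{0\}$ is the Majorana label of $P$. The monomials $\gamma_S$ are orthogonal, so $\mathrm{Tr}_B[D_C(\id_A\otimes\gamma_S)]$ is nonzero only when $S=C_{t^*}(x)$. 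In that case its squared 2-norm is the same as in the unrotated computation, namely $1/d_M$ per Pauli; this is consistent with Lemma~\ref{lem:w}, since $\sum_k\binom{2m}{k}=4^m-1$. Therefore
\begin{equation*}
\mcw_k(\omega_C)=\frac{1}{d_M}\,\#\{x\neq0:|C_{t^*}(x)|=k\},
\end{equation*}
and so
\begin{equation*}
d_Ad_{B_f}\Delta_2^{\mgb C}=\sum_{x\neq0}p_{|C_{t^*}(x)|}(n_{B_f},n_B).
\end{equation*}

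Next I evaluate $p_k$ at $k=|C_{t^*}(x)|=4^mt^*+|x|$. Here $p_k=\binom{2n_{B_f}}{k}/\binom{2n_B}{k}$, with $2n_{B_f}=2n_B-2r$ and $r$ fixed. Using the dual form from Section~\ref{sec:ffbh},
\begin{equation*}
p_k=\frac{\binom{2n_B-k}{2r}}{\binom{2n_B}{2r}}=\prod_{j=0}^{2r-1}\frac{2n_B-k-j}{2n_B-j}\to\Bigl(1-\lim\frac{k}{2n_B}\Bigr)^{2r},
\end{equation*}
which holds for fixed $r$ whenever $k/n_B$ converges to a limit below $2$. Since $t^*\sim n_B/(4^m-1)$, we get $k/(2n_B)\to 4^m/(2(4^m-1))$. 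This gives
\begin{equation*}
1-\frac{k}{2n_B}\to\frac{4^m-2}{2(4^m-1)}=q_m^{-1},
\end{equation*}
uniformly over the $4^m-1$ labels $x$, because $|x|\le 2m$ is bounded. Summing these $4^m-1$ identical limits yields $(4^m-1)q_m^{-2r}$.

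The main obstacle I expect is bookkeeping rather than anything conceptual. First, I need to check that the Clifford conjugation only permutes monomials up to sign and preserves the per-monomial weight $1/d_M$. Second, I need the limit to be an honest $\sim$ with finitely many fixed-$r$ factors. The floor in $t^*$ and the $|x|$ shift each contribute only $O(1)/n_B$ to $k/(2n_B)$, so each factor converges to $q_m^{-1}$ and the asymptotic equivalence follows. I would also check that $m\ge1$, so that $4^m-2>0$ and $k<2n_B-2r$; this keeps every $p_k$ nonzero and the product expression valid.
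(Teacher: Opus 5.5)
Your proposal is correct and follows the paper's proof essentially step for step: you apply Theorem~\ref{thm:main} to the Clifford-rotated input $\omega_{C_{t^*}}$, show that each nonzero message label $x$ contributes weight $1/d_A$ only in the sector of degree $4^m t^*+|x|$, and evaluate $p_k$ through the dual ratio $\binom{2n_B-k}{2r}/\binom{2n_B}{2r}$ as a product of $2r$ factors, each tending to $q_m^{-1}$. The differences are cosmetic: you sum over the labels $x$ directly instead of grouping them by $|x|=j$ with multiplicity $\binom{2m}{j}$, and you spell out the check (using $m\ge 1$) that each $p_k$ is nonzero, which the paper leaves implicit.
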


Remarkably, $q_m>2$, so the factor $q_m^{-2r}$ appearing in Theorem~\ref{thm:augrec} decays faster with $r$ than the corresponding Haar-random factor $2^{-2r}$. This shows that the corresponding recovery infidelity can be asymptotically smaller than in the Haar-random case (although notice that this will only be meaningfully true for small $m$, with $q_m\to2$ exponentially quickly in $m$). Indeed, from the (upper and lower) bounds of Theorem~\ref{thm:new}, we have average infidelities given by (for a fixed message size $m$ and first taking $n_B\to\infty$ at each fixed $r$, before
letting $r\to\infty$ so that Theorem~\ref{thm:augrec} applies)
\begin{align*}
    \varepsilon_{\mbu(d_B)} &= \Theta(2^{-2r}),\\
    \varepsilon_{\mgb {C}_{t^*}} &= \Theta(q_m^{-2r}).
\end{align*}
Suppose, then, that Bob wishes to recover Alice's state to some target infidelity $\varepsilon$. Solving for $r$ in the same asymptotic regime, the required number of radiation qubits is given by
\begin{align*}
    r_{\mbu(d_B)} &= \frac{\log(1/\varepsilon)}{\log 4} +\mco(1),\\
    r_{\mgb {C}_{t^*}} &= \frac{\log(1/\varepsilon)}{\log q_m^2} +\mco(1).
\end{align*}
In the high fidelity limit $\varepsilon\to 0$, we therefore find a multiplicative advantage for reaching a given target fidelity. For example, consider $m=1$, the case of a single message qubit. One quickly sees from Eq.~\eqref{eq:def_qm} that $q_1=3$, so that the relevant ratio is 
$r_{\mgb{C}_{t^*}}/r_{\mbu(d_B)}\rightarrow(\log 4)/(\log 9)\approx0.63$. Thus, in the high fidelity limit, the Clifford augmented free fermionic setting requires asymptotically only about $63\%$ as many radiation qubits as the fully Haar-random setting. This corresponds to a reduction of approximately $37\%$.

\section{Discussion}\label{sec:discussion}
The speed with which local quantum dynamics can scramble information is quite remarkable, and is in certain senses exponentially faster than that achievable by classical dynamics~\cite{schuster2024random,laracuente2024approximate}. Hayden and Preskill showed that these scrambling capabilities enable old black holes, modelled by Haar-random dynamics, to reflect infalling quantum information almost immediately in their radiation~\cite{hayden2007black}.

%summary 
In this work, we show that free fermionic black holes, modelled by Haar-random matchgate dynamics, act as degree-resolved mirrors, releasing input correlations stored in low-degree matchgate sectors much more slowly than those stored in high-degree sectors. Consequently, we prove that asymptotically perfect average recovery of infalling quantum information requires almost complete evaporation of the black hole. Despite this, we demonstrate that a single suitably chosen fixed Clifford unitary applied before the matchgate dynamics restores Haar-like recovery. This augmentation leaves the ensemble's second frame potential unchanged and far above its Haar value, demonstrating that Haar-like recovery does not require the dynamical ensemble to form a unitary 2-design. Moreover, we give an explicit Clifford construction that achieves the same target average recovery fidelity as Haar-random dynamics with asymptotically fewer radiation qubits. 

Therefore, our results show that the reflective capacities of black holes cannot simply be reduced to a question of their capacity to scramble, at least when the latter is assessed through the frame potential. Our construction also highlights the importance of how the message is encoded relative to the subsequent dynamics; indeed, our fixed Clifford preprocessing overcomes the low-degree obstruction while leaving the subsequent Gaussian dynamics unchanged. The effectiveness of this scheme is consistent with recent work showing that extensive degree Majorana strings develop volume-law operator entanglement under random matchgate dynamics~\cite{white2026exactoperatorcomplexitymeasures}. It would be interesting to find more dramatic realisations of dynamics that show this reflective behaviour despite exhibiting weak scrambling signatures under conventional diagnostics.

Lastly, our recovery guarantees are information-theoretic and do not by themselves provide an efficient decoding procedure. Building on explicit Hayden--Preskill decoding protocols~\cite{yoshida2017efficient}, it would be interesting to investigate whether the Gaussian structure of the dynamics and the freedom in choosing the Clifford encoding allow lower gate counts or circuit depths than a standard implementation of these protocols.

\begin{acknowledgments}
MK acknowledges support from the Melbourne Research Scholarship.
MW acknowledges support by the Laboratory Directed Research and Development (LDRD) program of Los Alamos National Laboratory (LANL) under project number 20260043DR, and by LANL’s ASC Beyond Moore’s Law project. This work was also supported by the Quantum Science Center (QSC), a National Quantum Information Science Research Center of the U.S. Department of Energy (DOE). 
The main ideas of this work are human-generated; we do, however, acknowledge useful technical conversations with  GPT-5.6 and GPT-6.
\end{acknowledgments}

\bibliography{quantum}
\appendix
\onecolumngrid

\section{Proofs} \label{sec:proofs}

\noindent
In this appendix, we give proofs of the results stated in the main text.

\subsection{Recovery fidelity and decoupling}
\label{app:recovery_general}

We begin by proving Lemma~\ref{lem:sep}, which we will subsequently use to prove the general fidelity bounds stated in Theorem~\ref{thm:new}.

\lemsep*

\begin{proof}
Let us begin by recalling  the setup and introducing some notation. Bob is attempting to find a decoder
$\mcd_{V}:ER\to \widehat{M}J$  which maximises the fidelity
\begin{align*}
F_{V^B,\mcd_{V}}=F\Big(\tr_{JB_f}\left[(\mcd_{V} \ot\id_{AB_f})(\id_E\ot \mcv^B\ot \id_A)(\Phi^{EB_i}\ot \Phi^{MA})\right], \Phi^{\widehat MA} \Big)\,,
\end{align*}
where we write $\mcv^B$ for the channel which acts as  $\mcv^B(\rho)=V^B\rho (V^B)\ad$ and $J$ is a junk register. Without loss of generality, we can purify Bob's decoding operation, taking $\mcd_{V}$ to be implemented by an isometry $W:ER\to \widehat{M} Z$, followed by tracing out the ancilla space $Z$. Denoting its action on the global pure state characterising the system after the action of the black hole by
\begin{equation}
\ket{\Psi_W}:=(\id_{AB_f}\otimes W)(\id_E\ot V^B\ot \id_A)(\ket{\Phi}^{EB_i}\ot \ket{\Phi}^{MA})\,,
\end{equation}
we then have
\begin{equation}
F_{V^B,\mcd_{V}} =\bra{\Psi_W}\bigl(\Phi^{\widehat MA}\otimes \id_{B_fZ}\bigr)\ket{\Psi_W}=: \braket{v_W|v_W}^{B_fZ}\,,
\end{equation}
where we emphasise that 
\begin{equation}
 \ket{v_W}^{B_fZ}=\bigl(\bra{\Phi}^{\widehat MA}\otimes \id_{B_fZ}\bigr)\ket{\Psi_W}^{\widehat MAB_f Z}
\end{equation}
is unnormalised. Now, by Cauchy-Schwarz, $\braket{v_W|v_W}=\max_{\|\ket \eta^{B_fZ}\|=1}\lvert\braket{\eta|v_W}\rvert^2$, so that we can equivalently write
\begin{align}
   F_{\rm rec}(V^B) &=\max_{\substack{W\ad W=\id_{ER}\\ \braket{\eta|\eta}^{B_fZ}=1}}\abs{\bigl(\bra\Phi^{\widehat MA}\otimes\bra\eta^{B_fZ}\bigr)\ket{\Psi_W}^{\widehat MAB_f Z}}^2\\
   &=\max_{\substack{W\ad W=\id_{ER}\\ \braket{\eta|\eta}^{B_fZ}=1}}\abs{\bigl(\bra\Phi^{\widehat MA}\otimes\bra\eta^{B_fZ}\bigr)(\id_{AB_f}\otimes W)(\id_E\ot V^B\ot \id_A)(\ket{\Phi}^{EB_i}\ot \ket{\Phi}^{MA})}^2\,.\label{eq:uhl}
\end{align}
Now, for a fixed $\ket\eta$, let us define $ \zeta_\eta^{B_f}:=\tr_Z\ketbra{\eta}{\eta}^{B_fZ}$. Then $\ket\Phi^{\widehat MA}\otimes\ket\eta^{B_fZ}$ is a purification of $\tau^{A}\otimes\zeta_\eta^{B_f}$, so that Eq.~\eqref{eq:uhl} is an optimisation over purifications of $\sg_{V}^{AB_f}=\tr_{ER}[(\id_E\ot \mcv^B\ot \id_A)(\Phi^{EB_i}\ot \Phi^{MA})]$ of a (squared) inner product which  by Uhlmann's theorem yields
\begin{equation}
 F_{\rm rec}(V^B)= \max_{  \braket{\eta|\eta}^{B_fZ}=1}F(\sg_{V}^{AB_f}, \tau^{A}\otimes\zeta_\eta^{B_f})= \max_{  \zeta^{B_f}}F(\sg_{V}^{AB_f}, \tau^{A}\otimes\zeta^{B_f})\,,
\end{equation}
where the last equality follows from the fact that we can always find a purification of any given $\zeta^{B_f}$.

\end{proof}

\thmnew*

\begin{proof}
We again begin by recalling (and introducing) some notation. We let $\lambda_j$ be the $D:=d_A d_{B_f}$ eigenvalues of $\sigma_{V}^{AB_f}$, and put $x_j=D\lambda_j$. We write $\langle\cdot\rangle$ for the (uniform) average
over $j$. Let $  
 X_{V}^{AB_f}=\sigma_{V}^{AB_f}-\tau^A\otimes\tau^{B_f},$ and, for a specific ${V^B}\in G$, let  
\begin{equation*}
 S({V^B}):=D\big\|X_{V}^{AB_f}\big\|_2^2=D\tr[(\sigma_{V}^{AB_f})^2]-1. 
\end{equation*}
It follows immediately from these definitions that we have 
$\langle x\rangle=1$, $\langle x^2\rangle=1+S({V^B})$, and $F(\sigma_{V}^{AB_f},\tau^{A}\otimes\tau^{B_f})=\langle\sqrt{x}\rangle^2$. As Hoelder's inequality with $p=3/2$ and $q=3$  gives $\sum_j x_j \leq \big(\sum_j\sqrt{x_j}\big)^{2/3}\big(\sum_j{x_j}^2\big)^{1/3}$, we further conclude that $1=\langle x\rangle^3\leq\langle\sqrt{x}\rangle^2\langle x^2\rangle$. Taking $\zeta^{B_f}=\tau^{B_f}$ in Eq.~\eqref{eq:fid_opt} then yields
\begin{equation}
    F_{\rm rec}(V^B) \geq F\big( \sigma_{V}^{AB_f},\id_{AB_f}/D\big)=\langle\sqrt{x}\rangle^2\geq \frac{1}{\langle x^2\rangle} = \frac{1}{1+S({V^B})}\,.
\end{equation}
By taking averages and using Jensen's inequality the lower bound of Eq.~\eqref{eq:newbound} follows. \\

Establishing the other bound requires a few  technicalities. To begin,  fix an
orthonormal basis $\{\ket i\}_{i=1}^{d_A}$ of $A$, and define $P_i:=\ketbra{i}{i}\otimes \id_{B_f}$, and $\mcp_A(\sigma):=\sum_{i=1}^{d_A}P_i\sigma P_i$; by the \textit{pinching inequality}~\cite{winter2025generalised} we have $\sg_{V}^{AB_f}\leq d_A\mathcal P_A(\sg_{V}^{AB_f})$. Next, let us define $\sigma_i^{B_f}:=(\bra i\otimes \id_{B_f})
 \sg_{V}^{AB_f}(\ket i\otimes \id_{B_f})\geq0$; as $\tr_A\sg_{V}^{AB_f}$ is maximally mixed, their sum is $\sum_{i=1}^{d_A}\sigma_i^{B_f}=\tau^{B_f}$. As every term in the sum is positive, they each  obey
$\sigma_i^{B_f}\leq \tau^{B_f}= \id_{B_f}/d_{B_f}$.  Consequently,
\begin{equation}
\sg_{V}^{AB_f}\leq d_A \mathcal P_A(\sg_{V}^{AB_f})=d_A\sum_{i=1}^{d_A}\ketbra{i}{i}\otimes\sigma_i^{B_f}\leq\frac{d_A}{d_{B_f}}\id_{AB_f}\,,
\end{equation}
so that in particular every eigenvalue $\lm$ of $\sg_{V}^{AB_f}$ satisfies
$0\leq\lambda\leq {d_A}/{d_{B_f}}$.  Recalling that $x_j={d_A}{d_{B_f}}\lambda_j$
therefore gives the   bound $ 0\leq x_j\leq d_A^2$ for all $j$. Now, since $\langle x\rangle=1$,
\begin{align*}
    S(V^B)=\langle x^2\rangle-1 =\left\langle(x-1)^2\right\rangle =\left\langle(\sqrt{x}-1)^2(\sqrt{x}+1)^2\right\rangle&\leq(d_A+1)^2\left\langle(\sqrt{x}-1)^2\right\rangle\\
    &=(d_A+1)^2\left(\langle x\rangle-2\langle\sqrt{x}\rangle+1\right)\\
 &=2(d_A+1)^2(1-\langle\sqrt{x}\rangle)\\
 &=2(d_A+1)^2\Big(1-\sqrt{F(\sigma_{V}^{AB_f},\tau^{A}\otimes\tau^{B_f})}\Big)\,,
\end{align*}
where we have used $ 0\leq \sqrt{x_j}\leq d_A$ for all $j$, and that $F(\sigma_{V}^{AB_f},\tau^{A}\otimes\tau^{B_f})=\langle\sqrt{x}\rangle^2$.

Put $F_{0,V}:=F\left(\sigma_V^{AB_f},\tau^A\otimes\tau^{B_f}\right)$ and for an arbitrary state $\zeta^{B_f}$, let $F_{\zeta,V}:=F\left(\sigma_V^{AB_f},\tau^A\otimes\zeta^{B_f}\right)$. Then,
since $\sigma_V^{B_f}=\tau^{B_f}$, monotonicity of the fidelity under the partial trace gives $F(\zeta^{B_f},\tau^{B_f})\geq F_{\zeta,V}$;
using the angular triangle inequality for the fidelity,
\begin{equation}
\arccos\sqrt{F(\rho,\omega)}\leq\arccos\sqrt{F(\rho,\eta)}+\arccos\sqrt{F(\eta,\omega)},
\end{equation}
together with the multiplicativity of the fidelity under tensor products, we obtain
\begin{equation}
\arccos\sqrt{F_{0,V}}
\leq
2\arccos\sqrt{F_{\zeta,V}}.
\end{equation}
Consequently, $F_{\zeta,V}\leq (1+\sqrt{F_{0,V}})/2$, whence maximising over $\zeta^{B_f}$ and using Lemma~\ref{lem:sep} gives
\begin{equation}
    F_{\rm rec}(V^B)\leq\frac{1+\sqrt{F_{0,V}}}{2}.
\end{equation}
Combining the preceding two bounds, we obtain
\begin{align}
    1-F_{\rm rec}(V^B)&\geq\frac{1-\sqrt{F_{0,V}}}{2} \geq\frac{S(V^B)}{4(d_A+1)^2}.
\end{align}

Averaging over $V^B\sim\mu$ and using $\expect_{V^B\sim\mu}S(V^B)=d_Ad_{B_f}\Delta_2^\mu$, we finally obtain
\begin{equation}
\expect_{V^B\sim\mu}F_{\rm rec}(V^B)\leq
1-\frac{d_A d_{B_f}\Delta_2^\mu}{4(d_A+1)^2},
\end{equation}
which proves the upper bound.
\end{proof}

%----------------------------------------------------------------------

\subsection{Degree-resolved matchgate decoupling}
\label{app:polarised_mirror}

Next, we prove Theorem~\ref{thm:main}, which gives a degree-resolved expression for the decoupling error for arbitrary input states. We then evaluate the correlation weights $\mcw_k(\omega^{AB}_{\mathrm{HP}})$ for the Hayden--Preskill input state, as stated in Lemma~\ref{lem:w}.

\thmmain*

\begin{proof}\label{proof_theorem1}

Recalling our definition $\Delta_2^{\mgb,\omega}=\expect_{V^B\sim\mu_{\mgb}}\|X_{V,\omega}^{AB_f}\|_2^2$ with $X_{V,\omega}^{AB_f}=\sigma_{V,\omega}^{AB_f}-\omega^A\otimes\sigma_{V,\omega}^{B_f}$, we first see that the decoupling error can be expressed in terms of $D^{AB}:=\omega^{AB}-\omega^A\otimes\omega^B$:
\begin{align}
    \Delta_2^{\mgb,\omega}&=\expect_{V^B\sim\mu_{\mgb}}\|X_{V,\omega}^{AB_f}\|_2^2\notag\\
    &=\expect_{V^B\sim\mu_{\mgb}}\tr\Big[\left(X_{V,\omega}^{AB_f}\right)^2\Big]\\
    &=\expect_{V^B\sim\mu_{\mgb}}
        \tr\left[\left(\tr_R[(\id_A\otimes V^B)D^{AB}(\id_A\otimes (V^B)^\dagger)]\right)^2
    \right]\notag
\end{align}
Here, we used that $X_{V,\omega}^{AB_f}$ is Hermitian. Using the swap trick~\cite{mele2023introduction} and leaving identity operators on the two copies of $A$ implicit, we obtain the following two-copy representation:
\begin{align}
    \expect_{V^B\sim\mu_{\mgb}}
        \tr\Big[\left(X_{V,\omega}^{AB_f}\right)^2\Big]
    &=\expect_{V^B\sim\mu_{\mgb}}
        \tr\left[\left(\tr_R[(\id_A\otimes V^B)D^{AB}(\id_A\otimes (V^B)^\dagger)]\right)^2\right]
    \notag\\
    &=\expect_{V^B\sim\mu_{\mgb}}\tr\img[2.0cm]{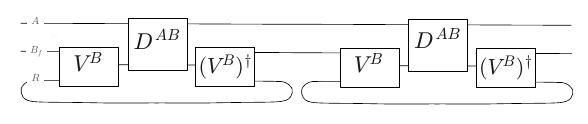}\notag\\
    &=\expect_{V^B\sim\mu_{\mgb}}\tr\img[3.5cm]{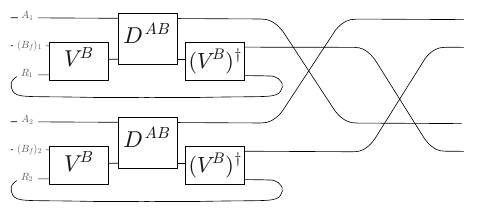}\notag\\
    &=\expect_{V^B\sim\mu_{\mgb}}
        \tr\left[(V^B)^{\otimes 2}(D^{AB})^{\otimes 2}((V^B)^{\dagger})^{\otimes 2} \mbs_A \widetilde \mbs_{B_f}\right]\notag\\
&=\tr\left[(D^{AB})^{\otimes2}\left(\mbs_A\otimes\expect_{V^B\sim\mu_{\mgb}}\left[((V^B)^\dagger)^{\otimes2}\widetilde{\mbs}_{B_f}(V^B)^{\otimes2}\right]\right)\right].\label{eq:swap_arbitrary_state}
\end{align}
This is the same swap-trick representation as explained in the preliminaries in Figure~\ref{fig:purity} and Eq.~\eqref{eq:pure_swap}, with the correlation operator $D^{AB}$ replacing the Hayden--Preskill input-specific expression. Here, $\mbs_A$ is the swap operator on the two copies of $A$, while
\begin{equation}
    \widetilde{\mbs}_{B_f}:=\mbs_{B_f}\otimes \id_{R_1R_2}=\frac{1}{d_{B_f}}\sum_{S\subseteq[2n_{B_f}]}\widetilde\gamma_S\otimes\widetilde\gamma_S,\qquad\widetilde\gamma_S:=\gamma_S^{B_f}\otimes \id_R
\end{equation}
is the swap operator on the two copies of $B_f$ extended by the identity on the radiation copies.

The two-copy matchgate twirl is the Hilbert--Schmidt orthogonal projection onto $\operatorname{Comm}\{V^{\otimes 2}: V\in\mgb\}$. Using the orthonormal basis introduced in Eqs.~\eqref{eq:comm_basis_0} and~\eqref{eq:comm_basis_1} for $n\equiv n_B$, the projection formula gives
\begin{align}
    \expect_{V^B\sim\mu_{\mgb}}\left[ ((V^B)^{\dagger})^{\otimes 2}\widetilde{\mbs}_{B_f}(V^B)^{\otimes 2}\right]=\sum_{k=0}^{2n_{B}}\left(\tr[Q_{k,0}\widetilde{\mbs}_{B_f}]Q_{k,0}+\tr[Q_{k,1}\widetilde{\mbs}_{B_f}] Q_{k,1}\right).\label{eq:projection_formula}
\end{align}
We start by calculating the scalar coefficient
\begin{align}
    \tr[Q_{k,0}\widetilde{\mbs}_{B_f}]&=\frac{1}{d_{B_f}d_B\sqrt{\binom{2n_B}{k}}}\sum_{\substack{P\in\mcb_k^{(n_B)}\\S\subseteq[2n_{B_f}]}}\tr[(P\otimes P)(\widetilde \gamma_S\otimes \widetilde \gamma_S)]\notag\\
    &=\frac{1}{d_{B_f}d_B\sqrt{\binom{2n_B}{k}}}\sum_{\substack{P\in\mcb_k^{(n_B)}\\S\subseteq[2n_{B_f}]}}(\tr[P\widetilde \gamma_S ])^2\notag\\
   &=\frac{d_B}{d_{B_f}}\frac{\binom{2n_{B_f}}{k}}{\sqrt{\binom{2n_B}{k}}},
\end{align}
where we used $\tr[P\widetilde \gamma_S ]=d_B\delta_{P,\widetilde\gamma_S}$ which allows $\binom{2n_{B_f}}{k}$ possibilities. 
Multiplying by $Q_{k,0}$, see Eq.~\eqref{eq:comm_basis_0}, leads to
\begin{align}
    \tr[Q_{k,0}\widetilde{\mbs}_{B_f}]Q_{k,0}=\frac{1}{d_{B_f}}\frac{\binom{2n_{B_f}}{k}}{\binom{2n_B}{k}}\sum_{P\in\mcb_k^{(n_B)}} P\otimes P.
\end{align}
By contrast, the parity-twisted sector does not contribute:
\begin{align}
\tr[Q_{k,1}\widetilde{\mbs}_{B_f}]&=\frac{i^{k\bmod 2}}{d_{B_f}d_B\sqrt{\binom{2n_B}{k}}}\sum_{\substack{P\in\mcb_k^{(n_B)}\\S\subseteq[2n_{B_f}]}}\tr\left[(P\otimes\Pi_{n_B} P)(\widetilde \gamma_S \otimes \widetilde \gamma_S)\right]\\
&=\frac{i^{k\bmod 2}}{d_{B_f}d_B\sqrt{\binom{2n_B}{k}}}\sum_{\substack{P\in\mcb_k^{(n_B)}\\S\subseteq[2n_{B_f}]}}\tr\left[P\widetilde\gamma_S\right]\tr\left[\Pi_{n_B}P\widetilde\gamma_S\right]\\
&=0.
\end{align}
Indeed, the first Majorana trace is nonzero only if $P=\widetilde\gamma_S$. In this case, the second trace equals
$\tr_B(\Pi_{n_B}P\widetilde\gamma_S)=\tr_B(\Pi_{n_B}P^2)=\tr_B(\Pi_{n_B})=0$.
By these results, we obtain
\begin{equation}
\expect_{V^B\sim\mu_{\mgb}}\left[((V^B)^\dagger)^{\otimes2}\widetilde{\mbs}_{B_f}(V^B)^{\otimes2}\right]=\frac{1}{d_{B_f}}\sum_{k=0}^{2n_{B_f}}\frac{\binom{2n_{B_f}}{k}}{\binom{2n_B}{k}}\sum_{P\in\mcb_k^{(n_B)}}P\otimes P.
\label{eq:partial_swap_twirl}
\end{equation}
Here, we used that all terms with $k>2n_{B_f}$ vanish in Eq.~\eqref{eq:projection_formula}, since $\binom{2n_{B_f}}{k}=0$ for $k>2n_{B_f}$.
To make it easier to assemble the final result for 
$\Delta_2^{\mgb,\omega}$, we introduce for $P\in\mcb_k^{(n_B)}$, the operator
\begin{equation}
    D_P^A:=\tr_B\left[D^{AB}(\id_A\otimes P)\right].
\end{equation}
Using the cyclicity of the partial trace with respect to operators acting only on $B$, it immediately follows that $D_P^A$ is Hermitian. Looking at the structure of Eq.~\eqref{eq:swap_arbitrary_state} and the structure of our matchgate twirl result in Eq.~\eqref{eq:partial_swap_twirl}, a second application of the swap trick gives
\begin{align}
    \tr\left[(D^{AB})^{\otimes2}\left(\mbs_A\otimes P_{B_1}\otimes P_{B_2}\right)\right]
    &=\tr\left[(D_P^A\otimes D_P^A)\mbs_A\right]=\tr[(D_P^A)^2]=\|D_P^A\|_2^2.
\end{align}
Thus, we finally obtain
\begin{align*}
\expect_{V^B\sim\mu_{\mgb}}\left[
\tr\left[(X^{AB_f}_{V,\omega})^2\right]\right]
&=\frac{1}{d_{B_f}}\sum_{k=0}^{2n_{B_f}}\frac{\binom{2n_{B_f}}{k}}{\binom{2n_B}{k}}\sum_{P\in \mcb_k^{(n_B)}}\tr\left[(D^{AB})^{\otimes2}\left(\mbs_A\otimes P_{B_1}\otimes P_{B_2}\right)\right]\\
&=\frac{1}{d_{B_f}}\sum_{k=0}^{2n_{B_f}}\frac{\binom{2n_{B_f}}{k}}{\binom{2n_B}{k}}\sum_{P\in \mcb_k^{(n_B)}}\tr\left[\left(\tr_B\left[D^{AB}(\id_A\otimes P)\right]\right)^2\right]\\
&=\frac{1}{d_{B_f}}\sum_{k=1}^{2n_{B_f}}\frac{\binom{2n_{B_f}}{k}}{\binom{2n_B}{k}}\sum_{P\in \mcb_k^{(n_B)}}\left\|\left(\tr_B\left[D^{AB}(\id_A\otimes P)\right]\right)\right\|_2^2
\end{align*}
The degree-zero contribution vanishes because $\mcb_0=\{\id_B\}$ and
\begin{equation}
    D^A=\tr_B(D^{AB})=\omega^A-\omega^A\tr(\omega^B)=0.
\end{equation}
Using the definitions of $p_k(n_{B_f},n_B)$~\eqref{eq:p} and $\mcw_k(\omega)$~\eqref{eq:wdef}, the result follows:
\begin{equation}
    \Delta_2^{\mgb,\omega}=\frac{1}{d_{B_f}}\sum_{k=1}^{2n_{B_f}}
    p_k(n_{B_f},n_B)\,\mcw_k(\omega).
\end{equation}
\end{proof}

%-------------------------------------------------

\lemw*
\begin{proof}
Recall that $\omega_{\mathrm{HP}}^{AB}=\Phi^{MA}\otimes\tau^{B_i}$, and let $d_A=d_M=2^m$. The marginals of $\omega_{\mathrm{HP}}^{AB}$ are 
\begin{align}
\omega^A&=\tr_B\left[\omega^{AB}_{\mathrm{HP}}\right]=\tr_M\left[\Phi^{MA}\right]\tr_{B_i}\left[\tau^{B_i}\right]=\frac{1}{d_A}\id_A\\
\omega^B&=\tr_A\left[\omega^{AB}_{\mathrm{HP}}\right]=\tr_A\left[\Phi^{MA}\right]\otimes\tau^{B_i}=\frac{1}{d_M}\id_{M}\otimes\tau^{B_i}.
\end{align}
Hence, the corresponding correlation operator is
\begin{align}
    D^{AB}_{\mathrm{HP}}=\left(\Phi^{MA}-\frac{1}{d_M d_A}\id_{MA}\right)\otimes\tau^{B_i}.
\end{align}
Recall that, in the input Jordan--Wigner ordering fixed in this work, the modes of $M$ come first, followed by the modes of $B_i$.

Since $B_i$ is maximally mixed and the modes of $M$ precede those of $B_i$, tracing over $B_i$ annihilates every Majorana monomial containing at least one Majorana generator from $B_i$. Hence, only Majorana monomials supported entirely on $M$ contribute. These are precisely $P=P_M\otimes \id_{B_i}$ with $P_M\in\mcb_k^{(m)}$.
For $k\geq 1$, we have $P_M\neq \id_{M}$ and hence $\tr(P_M)=0$. Therefore,
\begin{align}
\tr_B\left[D_{\mathrm{HP}}^{AB}(\id_A\otimes P)
    \right]&=\tr_M\left[\left(\Phi^{MA}-\frac{\id_{MA}}{d_M d_A}\right)(\id_A\otimes P_M)
    \right]\notag\\
    &=\tr_M\left[\Phi^{MA}(\id_A\otimes P_M)\right].
\end{align}
Using the maximally entangled-state identity, 
\begin{equation}
    \tr_M\left[\Phi^{MA}(P_M\otimes\id_A)\right]=\frac{P_A^{\mathsf T}}{d_M},
\end{equation}
and since every Majorana monomial
$P_M\in\mcb_k^{(m)}$ satisfies $P_M^2=\id_{M}$,
the weight of each contributing monomial is
\begin{align}
    \left\|\tr_M\left[\Phi^{MA}( P_M\otimes \id_A)\right]\right\|_2^2
    &=\frac{1}{d_M^2}\tr(P_A^2)=\frac{1}{d_M}.
\end{align}
There are $\binom{2m}{k}$ degree-$k$ Majorana monomials on $M$. Consequently,
\begin{align}
    \mcw_k(\omega_{\mathrm{HP}}^{AB})
    &=\sum_{P_M\in\mcb_k^{(m)}}\left\|\tr_M[\Phi^{MA}( P_M\otimes \id_A)]\right\|_2^2=\sum_{P_M\in\mcb_k^{(m)}}\frac{1}{d_M}=\frac{1}{d_M}\binom{2m}{k}.
\end{align}
For $k>2m$, no degree-$k$ Majorana monomial can be supported entirely on $M$. Hence, $\mcw_k(\omega_{\mathrm{HP}}^{AB})=0$ for $k>2m$.

\end{proof}

%----------------------------------------------------------------------

\subsection{Obstruction to information recovery}
\label{app:no_recovery}

 We now prove the bounds on the average optimal recovery fidelity under matchgate dynamics stated in Theorem~\ref{thm:main_3}. The lower bound follows directly from the results of the previous subsection and Theorem~\ref{thm:new}. To establish the upper bound, we first prove Lemma~\ref{lem:mg_lower_bound}, which bounds the decoupling error from below. All results are now specialised to the Hayden--Preskill input state $\omega_{\rm HP}^{AB}$.

\lemhp*

\begin{proof}
%setup idea
We use the duality between the trace norm and the operator norm, which implies that, for every Hermitian observable $Y_{V}$ satisfying $\|Y_{V}\|_{\infty}\le 1$,
\begin{align}
    \|X_{V}^{AB_f}\|_1\ge\big\lvert\tr\big[Y_{V}X_{V}^{AB_f}\big]\big\rvert.
\end{align}
As
\begin{align}
\tr(Y_{V}X_{V}^{AB_f})=\tr(Y_{V}\sigma_{V}^{AB_f})-\tr(Y_{V}(\tau^A\otimes\tau^{B_f})),
\end{align}
$Y_{V}$ is acting as a sort of correlation witness: its overlap with $X_{V}^{AB_f}$ measures the difference between its expectation value in the actual output state and in the corresponding marginalised state. 

To obtain a strong lower bound, we seek a witness with a large overlap with $X_{V}^{AB_f}$. The degree-dependent survival factors appearing in Theorem~\ref{thm:main} are largest in the degree-one sector, which motivates constructing a witness that detects a degree-one Majorana correlation. The construction is possible for the Hayden--Preskill input state because the maximally entangled state contains perfectly normalised correlations between each message Majorana and its partner on $A$:
\begin{align}
   \tr\left[\omega_{\mathrm{HP}}^{AB}(\gamma_1^A\otimes\gamma_1^M\otimes \id^{B_i})\right]=1.
\end{align}
Using the convention from Eq.~\eqref{eq:majo_linearly}, we obtain
\begin{align}
    V^B\gamma_1^M(V^B)^{\dagger}&=\sum_{\nu=1}^{2n_B}O_{1\nu}(V^B)\gamma_\nu^B\\
    &=\sum_{\nu=1}^{2n_{B_f}}O_{1\nu}(V^B)\gamma_\nu^{B_f}+\sum_{\nu=2n_{B_f}+1}^{2n_B}O_{1\nu}(V^B)\gamma_\nu^B.
\end{align}
Define the component of the transformed Majorana supported on $B_f$ by
\begin{align}
    \Gamma_{V,f}=\sum_{\nu=1}^{2n_{B_f}}O_{1\nu}(V^B)\gamma_\nu^{B_f}.
\end{align}
We further define its length by
\begin{align}\label{eq:sv2}
    s_{V}^2:=\sum_{\nu=1}^{2n_{B_f}}O_{1\nu}(V^B)^2,
\end{align}
so that $\Gamma_{V,f}^2=s_{V}^2 \id_{B_f}$,
and consequently $\|\Gamma_{V,f}\|_{\infty}=s_{V}$.
Moreover, since $O(V^B)$ is orthogonal, $\sum_{\nu=1}^{2n_B}O_{1\nu}(V^B)^2=1$, and therefore $0\le s_{V}\le 1$. If $s_V=0$, the desired pointwise inequality $\|X_V^{AB_f}\|_1^2\geq s_V^2$ is trivial. We may therefore assume
that $s_V>0$. Define the normalised operator $\hat\Gamma_{V,f}:={\Gamma_{V,f}}/{s_{V}}$, and choose  the correlation witness $ Y_{V}:=\gamma_1^A\otimes \hat\Gamma_{V,f},$
with $\|Y_{V}\|_{\infty}=\|\gamma_1^A\|_{\infty}\|\hat\Gamma_{V,f}\|_{\infty}=1$.
We now evaluate the overlap of $Y_{V}$ with the output correlation operator:
\begin{align}
    \tr\left[Y_{V}X_{V}^{AB_f}\right]=\tr\left[\left(\gamma_1^A\otimes\hat\Gamma_{V,f}\right)\sigma_{V}^{AB_f}\right]-\tr\left[\left(\gamma_1^A\otimes\hat\Gamma_{V,f}\right)\left(\tau^A\otimes\tau^{B_f}\right)\right].
\end{align}
For the first summand, we obtain
\begin{align}
\tr\left[\left(\gamma_1^A\otimes\hat\Gamma_{V,f}\right)\sigma_{V}^{AB_f}\right]&=\frac{1}{s_{V}}\tr\left[\left(\gamma_1^A\otimes\Gamma_{V,f}\right)\tr_R\left[\left(\id_A\otimes V^B\right)\omega^{AB}_{\mathrm{HP}}\left(\id_A\otimes (V^B)^\dagger\right)\right]\right]\notag\\
&=\frac{1}{s_{V}d_{B_i}}\tr\left[\left(\gamma_1^A\otimes\Gamma_{V,f}\right)\left(\id_A\otimes V^B\right)(\Phi^{MA}\otimes\id_{B_i})\left(\id_A\otimes (V^B)^\dagger\right)\right]\\
&=\frac{1}{s_{V}d_{B_i}}\tr\img[2.0cm]{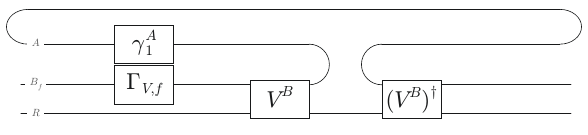}\notag\\
 &=\frac{1}{s_{V}d_B}\tr\img[1.8cm]{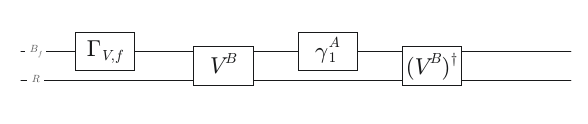}\notag\\
    &=\frac{1}{s_{V}d_{B}}\tr_B\left[\Gamma_{V,f}V^B\gamma_1^{M}(V^B)^\dagger\right]\notag\\
    &=\frac{d_R}{s_{V}d_{B}}\tr_{B_f}\left[\Gamma_{V,f}^2\right]\notag\\
    &=\frac{s_{V}}{d_{B_f}}\tr_{B_f}\left[\id^{B_f}\right]\\
    &=s_{V}.
\end{align}
The second summand vanishes,
\begin{align}
\tr\left[\left(\gamma_1^A\otimes\hat\Gamma_{V,f}\right)\left(\omega^A\otimes\sigma_{V}^{B_f}\right)\right]&=\tr\left[\gamma_1^A\omega^A\right]\tr\left[\hat\Gamma_{V,f}\sigma_{V}^{B_f}\right]=\frac{1}{d_A}\tr\left[\gamma_1^A\right]\tr\left[\hat\Gamma_{V,f}\sigma_{V}^{B_f}\right]= 0.
\end{align}
Thus, we obtain $\tr\big[Y_{V}X_{V}^{AB_f}\big]=s_{V}$,
and consequently, $\|X_{V}^{AB_f}\|^2_1\ge s_{V}^2$; averaging over $V^B\in\mgb$ yields
\begin{align}
  \expect_{V\sim\mu_{\mgb} } \|X_{V}^{AB_f}\|^2_1  \ge\expect_{V\sim\mu_{\mgb} } s_{V}^2
\end{align}
and it remains to evaluate $\expect_{V^B\sim\mu_{\mgb}} s_{V}^2$. Now, the element $O\in\mbso(2n_B)$ associated to a Haar-random matchgate $V^B$ is itself Haar randomly distributed on $\mbso(2n_B)$, and in particular satisfies $\expect_{V^B\sim\mu_{\mgb}}\left[O_{\mu\nu}(V^B)^2\right]={1}/{2n_B}$ for all $\mu,\nu$.

It follows immediately from Eq.~\eqref{eq:sv2} that $\expect_{V^B\sim\mu_{\mgb}} s_{V}^2={n_{B_f}}/{n_B}$; since $n_{B_f}=n_B-r$, we conclude $\Delta_1^{\mgb}\ge 1-{r}/{n_B}$, as desired.

\end{proof}

%----------------------------------------------

\thmmgfidelity*

\begin{proof}
Using the correlation weights $\mcw_k(\omega^{AB}_{\mathrm{HP}})$ from Lemma~\ref{lem:w}, and since $d_A=d_M$, we obtain $d_{B_f}d_M\Delta_2^{\mgb}=R_{\mathrm{MG}}$; 
substituting this expression into the lower bound from Theorem~\ref{thm:new} yields the claimed lower bound. 
We now prove the upper bound. For each $V^B$ with $s_{V}>0$, let
\begin{align}
    Y_{V}=\gamma_1^A\otimes\widehat{\Gamma}_{V,f}
\end{align}
be the correlation witness constructed in the proof of Lemma~\ref{lem:mg_lower_bound}, which we recall to satisfy $Y_{V}^2=\id_{AB_f}$ and 
\begin{align}
     \tr\,[Y_{V}\sigma_{V}^{AB_f}]=s_{V}.\label{eq:witness_expectation}
\end{align}
Moreover, for every state $\zeta^{B_f}$,
\begin{align}
\tr\left[Y_{V}\left(\tau^A\otimes\zeta^{B_f}\right)\right]&=\tr[\gamma_1^A\tau^A]\,\tr[\widehat{\Gamma}_{V,f}\zeta^{B_f}]=0,\label{eq:witness_product_expectation}
\end{align}
because $\tau^A=\id_A/d_A$ and $\gamma_1^A$ is traceless.
Consider the two-outcome projective measurement
\begin{align}
    P_{V,\pm}:=\frac{1}{2}\left(\id_{AB_f}\pm Y_{V}\right).
\end{align}
Eqs.~\eqref{eq:witness_expectation} and~\eqref{eq:witness_product_expectation} imply that its outcome probabilities on $\sigma_{V}^{AB_f}$ and $\tau^A\otimes\zeta^{B_f}$ are, respectively,
\begin{align}
    p_{V,\pm}&=\tr[P_{V,\pm}\sigma_{V}^{AB_f}]=\frac{1\pm s_{V}}{2},\qquad q_{V,\pm}=\tr\left[P_{V,\pm}\left(\tau^A\otimes\zeta^{B_f}\right)\right]=\frac{1}{2}.
\end{align}
For two classical probability distributions $p$ and $q$, our fidelity convention gives $F(p,q)=\left(\sum_{\alpha}\sqrt{p_\alpha q_\alpha}\right)^2$.

Applying the measurement channel $\mathcal M_V(\rho):=\sum_{\alpha\in\{+,-\}}\tr[P_{V,\alpha}\rho]\ket{\alpha}\!\bra{\alpha}$ therefore gives
\begin{align}
    F\!\left(\sigma_{V}^{AB_f},\tau^A\otimes\zeta^{B_f}\right)&\leq\left(\sqrt{p_{V,+}q_{V,+}}+\sqrt{p_{V,-}q_{V,-}}\right)^2\\
    &=\frac{1}{4}\left(\sqrt{1+s_{V}}+\sqrt{1-s_{V}}\right)^2\\
    &=\frac{1}{2}\left(1+\sqrt{1-s_{V}^2}\right).
\end{align}
Since this bound holds for every $\zeta^{B_f}$, Lemma~\ref{lem:sep} implies
\begin{align}
    F_{\mathrm{rec}}(V^B)\leq\frac{1}{2}\left(1+\sqrt{1-s_{V}^2}\right).
    \label{eq:pointwise_recovery_upper_bound}
\end{align}
For $s_{V}=0$, Eq.~\eqref{eq:pointwise_recovery_upper_bound} follows directly from $F_{\mathrm{rec}}(V^B)\leq 1$, so it holds for every $V^B$.
Averaging and applying Jensen's inequality to the concave function $x\mapsto\sqrt{1-x}$ gives
\begin{align}
    \expect_{V^B\sim\mu_{\mgb}} F_{\mathrm{rec}}(V^B)&\leq\frac{1}{2}\left(1+\expect_{V^B\sim\mu_{\mgb}}\sqrt{1-s_{V}^2}\right)\\
    &\leq\frac{1}{2}\left(1+\sqrt{1-\expect_{V^B\sim\mu_{\mgb}}s_{V}^2}\right)\\
    &= \frac{1}{2}\left(1+\sqrt{\frac{r}{n_B}}\right),
\end{align}
where we have used that, as shown in the proof of Lemma~\ref{lem:mg_lower_bound}, $\expect_{V^B\sim\mu_{\mgb}}s_{V}^2=1-{r}/{n_B}$. 
This proves the claim.
\end{proof}

%----------------------------------------------
\subsection{Recovery with a non-Gaussian horizon}
\label{app:non_gaussian}

Finally, we prove Theorem~\ref{thm:augrec} by combining the Clifford construction from Section~\ref{sec:nongaussianhorizon} with the degree-resolved decoupling formula of Theorem~\ref{thm:main}.

\thmaugrec*

\begin{proof}
We are going to apply Theorem~\ref{thm:main} to $\omega_{C_{t^*}}=(\id_A\otimes C_{t^*})\omega^{AB}_{\mathrm{HP}}(\id_A\otimes C_{t^*}^\dagger)$. Since the total dynamics are given by $V^B C_{t^*}$ with $V^B\sim\mu_{\mgb}$, we have $\Delta_2^{\mgb C_{t^*}}=\Delta_2^{\mgb,\omega_{C_{t^*}}}$. Moreover, since $\omega_{\mathrm{HP}}^B=\tau^B$ and $C_{t^*}\tau^BC_{t^*}^\dagger=\tau^B$, the correlation operator of the transformed input is
\begin{equation}
    D_{C_{t^*}}^{AB}=(\id_A\otimes C_{t^*})D_{\mathrm{HP}}^{AB}(\id_A\otimes C_{t^*}^\dagger).
\end{equation}
For each nonzero message label $x\in\mathbb F_2^{2m}$, we define $P^M_x:=\gamma_x^M$, and  write $C_{t^*}(P_x^M\otimes I_{B_i})C_{t^*}^{\dagger}=\pm Q_x^B$, where
each $Q_x^B$ is a Majorana monomial of degree $|C_{t^*}(x)|=t^*4^m+|x|$.
Its correlation weight remains $1/d_A$: the corresponding partial-trace coefficient on $A$ is, up to sign, $P_x^{\mathsf T}/d_A$. There are $\binom{2m}{j}$ labels with $|x|=j$. Consequently, $\mcw_{t^*4^m+j}(\omega_{C_{t^*}})
=d_A^{-1}\binom{2m}{j},$ for all $1\le j\le 2m$, and all other weights vanish. Thus, Theorem~\ref{thm:main} yields
\begin{equation*}
    d_A d_{B_f}\Delta_2^{\mgb C_{t^*}}=\sum_{j=1}^{2m}\binom{2m}{j}\frac{\binom{2n_B-2r}{t^*4^m+j}}
     {\binom{2n_B}{t^*4^m+j}}.
\end{equation*}
Now let $k_j=t^*4^m+j$. 
%Since $t^*=(n_B-m)/(4^m-1)+\mco(1)$, we have $k_j/(2n_B)=1-q_m^{-1}+O_m(n_B^{-1})$. 
Since $t^*=(n_B-m)/(4^m-1)+\mco(1)$, we have,   for $1\leq j\leq 2m$,
\begin{align}
    1-\frac{k_j}{2n_B}&=1-\frac{4^m}{2(4^m-1)}+\mco_m(n_B^{-1})\notag\\
    &=\frac 12-\frac{1}{2(4^m-1)}+\mco_m(n_B^{-1})\notag\\
    &= q_m^{-1}+\mco_m(n_B^{-1}),
\end{align}
where $q_m:=\frac{2(4^m-1)}{4^m-2}$.
For fixed $r$, we then rewrite each ratio as
\begin{equation*}
p_{k_j}=\frac{\binom{2n_B-2r}{k_j}}{\binom{2n_B}{k_j}}=\frac{\binom{2n_B-k_j}{2r}}{\binom{2n_B}{2r}}=\prod_{\ell=0}^{2r-1}\frac{2n_B-k_j-\ell}{2n_B-\ell}=q_m^{-2r}+O_{m,r}(n_B^{-1}).
\end{equation*}
Summing and using $\sum_{j=1}^{2m}\binom{2m}{j}=4^m-1$ then proves the claim.
\end{proof}

\end{document}